\title{Robust Implementation with Costly Information}
\author{Harry Pei\footnote{Department of Economics, Northwestern University. Email: harrydp@northwestern.edu} \and Bruno Strulovici\footnote{Department of Economics, Northwestern University. Email: b-strulovici@northwestern.edu}\footnote{We thank Gabriel Carroll, Matt Jackson, Takashi Kunimoto, Meg Meyer, Stephen Morris, David Rodina,
Satoru Takahashi, and Olivier Tercieux for helpful comments. Pei thanks the NSF Grant SES-1947021 for financial support.}}
\date{\today}
\begin{document}

\numberwithin{equation}{section}

\newtheorem{Proposition}{\hskip\parindent\bf{Proposition}}
\newtheorem{Theorem}{\hskip\parindent\bf{Theorem}}
\newtheorem*{Theorem1}{\hskip\parindent\bf{Theorem 1'}}
\newtheorem*{Theorem3}{\hskip\parindent\bf{Theorem 3'}}
\newtheorem*{Assumption2}{\hskip\parindent\bf{Assumption 2'}}
\newtheorem{Lemma}{\hskip\parindent\bf{Lemma}}
\newtheorem{Corollary}{\hskip\parindent\bf{Corollary}}
\newtheorem{Definition}{\hskip\parindent\bf{Definition}}
\newtheorem{Assumption}{\hskip\parindent\bf{Assumption}}
\newtheorem{Condition}{\hskip\parindent\bf{Condition}}
\newtheorem{Conjecture}{\hskip\parindent\bf{Conjecture}}
\newtheorem{Claim}{\hskip\parindent\bf{Claim}}

\maketitle
\noindent \textbf{Abstract:} We study whether a planner can robustly implement a state-contingent social choice function when (i) agents must incur a cost to learn the state and (ii) the planner faces uncertainty regarding agents' preferences over outcomes, information costs, and beliefs and higher-order beliefs about one another's payoffs. We propose mechanisms that can approximately implement any desired social choice function when the perturbations concerning agents' payoffs have small ex ante probability. The mechanism is also robust
to trembles in agents' strategies and when agents receive noisy information about the state.\\

\noindent \textbf{Keywords:} Robust Implementation, Partial Implementation, Critical Path Lemma.

\begin{spacing}{1.5}
\section{Introduction}\label{sec1}
Implementation theory studies whether and how a state-contingent social choice function, such as convicting guilty defendants and acquitting innocent ones, can be achieved when (i) the information necessary to implement the objective is unknown to the planner and (ii) the social choice function is in conflict with the interests of the agents who do have access to the relevant information.

We study this question under two additional constraints: (iii) agents need to incur a cost to learn the relevant information and (iv) the planner knows agents' payoffs with probability close to one, but faces uncertainty regarding agents' payoffs with the remaining probability and regarding agents' beliefs and higher-order beliefs about one another's payoffs.

The motivation for our research question is twofold. First, in many situations of interest, agents do not possess the relevant information at the outset and must acquire it at some cost. For example, investigators need to exert effort in order to learn whether a defendant is guilty or innocent.

Second, the literature on robust implementation, pioneered by Bergemann and Morris (2005), has underscored the importance of implementing social choice functions when agents' preferences are not common knowledge. This literature has shown that only a restrictive subset of social choice functions can be implemented when robustness is required to hold {\em globally}: a social choice function can be implemented for arbitrary payoffs and beliefs of the agents only if it is ex post incentive compatible (Bergemann and Morris 2005). Even when the notion of implementation is {\em local} but in an {\em interim} sense, i.e., for all profiles of agent-types close to a given profile, implementable social choice functions must satisfy {\em Maskin monotonicity} (Oury and Tercieux 2012), a demanding property that is often violated in applications.

We consider a novel notion of robust implementation that builds on the concept of equilibrium robustness introduced by Kajii and Morris (1997). A Nash equilibrium of a complete information game is robust if it can be approximated by some equilibria in all nearby incomplete information games, which are all games in which players' payoffs match those of the complete information game with probability close to one. Building on this approach, our concept of robust implementation is \textit{local} and \textit{ex ante}, rather than \textit{interim}, in the sense that the perturbations considered have a small probability ex ante relative to the complete information game. As a result, the concept has the potential to avoid some of the most stringent implications of global and interim concepts.



Our concept of robust implementation departs from Kajii and Morris (1997) by imposing a key restriction on the perturbations considered by the planner. Since we study a mechanism design problem rather than an arbitrary game, we focus on perturbations in which agents' payoffs do not depend per se on the {\em messages} that they send to the mechanism. Instead, perturbations pertain to agents' preferences concerning the {\em outcomes} implemented as a result of their messages and it is common knowledge that messages are cheap talk.

We show that under our notion of robust implementation, every social choice function is robustly implementable.\footnote{This result holds under a generic assumption on the objective state distribution, which is that one state is a priori more likely than any other states. This generic assumption can be dropped when there is a known upper bound on agents' costs of learning.} Our analysis focuses on the case of two agents.\footnote{Our main result is a possibility result: we construct a mechanism that implements any desired social choice function when two agents have the ability to learn the state. This mechanism can straightforwardly be extended to any arbitrary number agents, for instance by applying it to two specific agents and ignoring the reports of all remaining agents.} A planner wishes to implement a social choice function that maps a finite set of states to a finite set of outcomes. The planner commits to a mechanism mapping agents' messages to outcomes and transfers. Agents observe the mechanism as well as their own payoff functions under the perturbation. They independently decide whether to observe the state at some cost and then send messages to the planner.\footnote{In the spirit of our robustness exercise, we allow small amount of uncertainty regarding the accuracy of agents' signal, and allow agents to tremble with small probability when sending messages. See Section 4.3 for details.}

We construct a class of mechanisms called \textit{Augmented Status Quo Rules} that robustly implements any social choice function. These mechanisms treat states asymmetrically by (i) introducing a status quo outcome, implemented whenever agents reports are in conflict or when they match on the state that is the most likely ex ante, and (ii) rewarding agents differently when their reports match depending on the messages they send. 

In addition, these mechanisms include {\em strictly more messages than states}: one message for the ex ante most likely state and two messages for every other state. The two messages which represent the same state induce the same outcome regardless of the other agent's report but lead to different transfers. In Section \ref{sec2}, we use an example to explain why this larger message space makes the mechanism robust to uncertainty with respect to agents' biases over outcomes, agents' information costs, agents' trembles, and to noise in agents' private signals about the state.

Finally, we provide several results concerning robust implementation for stronger notions of robustness. First, Proposition~\ref{Prop1} shows that it is impossible to approximate implement any non-constant social choice function if the probability of agents' payoff perturbations is bounded away from zero. This result illustrates the role for implementability of our notion's focus on ex ante \textit{local} perturbations, i.e., perturbations for which agents' payoffs coincide with the unperturbed environment with probability close to one.

Second, our notion of robustness concerns {\em partial} implementation: we only require that the desired social choice function be implemented by some (not necessarily all) equilibria of the game induced by our mechanism. In fact, when agents' unperturbed preferences put no weight on how outcomes relate to the state of the world, or when agents' costs of learning in the unperturbed environment are above some cutoff, we show (Proposition~\ref{Prop2}) that no mechanism can fully implement, even approximately implement, any non-constant social choice function. In these situations, there always exists an equilibrium in which no agent learns the state.\footnote{This result echoes Strulovici (2021), who shows in a sequential model of learning that when agents' preferences are state independent, implementation is impossible even in a partial sense when signals about the state of the world are subject to an {\em information attrition} condition.} We also show a feasibility result under a strong condition relating agents' preferences and the social choice function: when at least one agent cares directly about the state of the world and his preference and the social choice function satisfy a strict version of the cyclical monotonicity condition introduced by Rochet (1987), the planner can robustly and fully implement that social choice function by ignoring the report of the other agent.

Section~\ref{sec2} illustrates our research question and main results by a simple example featuring a binary state of the world. We explain why mechanisms that treat states symmetrically by rewarding agents by a fixed amount when their reports match and giving them nothing when their reports are different, and uniformly randomizing across social choices when agents' reports mismatch, fail to implement desired social choice functions. We then introduce mechanisms that treat states asymmetrically by singling out one state to determine a status quo outcome and by providing asymmetric rewards across states, and explain how this asymmetry helps address robustness issues. We also explain using strictly more messages than states helps achieve robustness when agents' cost of information are uncertain. The general model is introduced in Section~\ref{sec3} and our main results are stated and derived in Section~\ref{sec4}. Finally, Section~\ref{sec5} presents impossibility results pertaining to global and full notions of implementation, and one possibility result for full implementation when some agents have specific state-sensitive preferences.

\section{Example}\label{sec2}
Consider a social planner facing a defendant who may be guilty or innocent. The defendant's prior probability of guilt is~$q \in (0,1)$.

The planner's objective is to convict guilty defendants and acquit innocent ones. She commits to a mechanism $\mathcal{M} \equiv \{M_1,M_2,g,t_1,t_2\}$ in order to solicit information from two agents, where $M_i$ is a finite set of messages for agent $i$, $g: M_1 \times M_2 \rightarrow [0,1]$ is a mapping from messages to the probability of conviction, and $t_i: M_1 \times M_2 \rightarrow \mathbb{R}$ is the transfer to agent $i$.

Each agent can, at some cost,  conduct an investigation and learn whether the defendant is guilty or innocent. Agents' decisions to investigate and the observations resulting from these investigations are private.

We assume for now that agents' payoffs are common knowledge and symmetric: agent~$i$'s realized payoff is $t_i-c \chi_i$ where $\chi_i \in \{0,1\}$ denotes $i$'s decision of whether to conduct investigation and $c$ is $i$'s cost of conducting his investigation.

The planner's objective of convicting guilty defendants and acquitting innocent ones can be achieved  via the following Maskin mechanism, in which each agent is asked to report whether the defendant is guilty or innocent. The outcome and the transfers are given by:
\begin{center}
\begin{tabular}{| c | c | c |}
\hline
  transfers & innocent & guilty  \\
  \hline
  innocent & $R,R$ & $0,0$ \\
  \hline
  guilty & $0,0$ & $R,R$  \\
  \hline
\end{tabular}
\begin{tabular}{| c | c | c |}
\hline
 outcome & innocent & guilty  \\
  \hline
  innocent & acquit & convict with prob $1/2$ \\
  \hline
  guilty & convict with prob $1/2$ & convict  \\
  \hline
\end{tabular}
\end{center}
As long as the reward $R>0$ is large enough relative to the cost $c$, there exists an equilibrium where both agents
conduct investigations and report their findings truthfully.

\paragraph{Challenges with Biased Agents:}
The previous mechanism fails to achieve the objective if agents are subject to biases over outcomes, even if these biases occur with arbitrarily small probability. To fix see this, consider the following perturbation of agents' payoffs and information. Nature draws a ``circumstance'' $\omega$ from the space $\Omega \equiv \{\omega_0,\omega_1,\omega_2,...\}$ such that $\omega=\omega_t$ occurs with probability $\eta(1-\eta)^t$ for every $t \in \mathbb{N}$ where $\eta>0$ is close to $0$. The realization of $\omega$ is independent of whether the defendant is guilty or innocent. Agent 2's payoff is $t_2-\chi_2 c$ at every $\omega \in \Omega$. Agent $1$'s payoff is $t_1-\chi_1 c$ at every $\omega \in \Omega \backslash \{\omega_0\}$, but includes a large benefit $B>0$ from acquitting the defendant if $\omega=\omega_0$ (e.g., agent 1 is the defendant's friend when $\omega=\omega_0$). Agent $1$ knows which element of the partition $\{\omega_0\},\{\omega_1,\omega_2\},\{\omega_3,\omega_4\},...$ the realized $\omega$ belongs to before deciding whether to conduct his investigation as well as what to report.
Agent $2$ knows which element of the partition $\{\omega_0,\omega_1\},\{\omega_2,\omega_3\},...$ the realized $\omega$ belongs to before deciding whether to conduct his investigation as well as what to report. Each agent updates his belief about the other agent's type according to Bayes Rule.\footnote{For example, the type of agent $2$ who knows that $\omega \in \{\omega_0,\omega_1\}$ attaches probability $\frac{1}{2-\eta}$ to agent 1 being type $\{\omega_0\}$, the type of agent $1$ who knows that $\omega \in \{\omega_1,\omega_2\}$ attaches probability $\frac{1}{2-\eta}$ to agent 2 being type $\{\omega_0,\omega_1\}$.}

Under this perturbation, the Maskin mechanism fails to implement the desired objective if $B$ is too large: for every $R \in \mathbb{R}_+$, there exists $B > R$ such that no matter how small~$\eta$ is, there is a unique equilibrium of the perturbed environment and this equilibrium is such that agents always report that the defendant innocent and the defendant is acquitted regardless of his guilt. To understand this result, notice that when $\omega=\omega_0$, agent $1$ wants to maximize the probability of acquittal, so he \textit{reports innocent} with probability~1. When $\omega \in \{\omega_0,\omega_1\}$, agent $2$ is unbiased, but he believes that agent~1 is biased with probability greater than $1/2$, so he believes that agent 1 will \textit{report innocent} with probability greater than $1/2$. Since agent 2 maximizes his expected transfer minus his cost of investigation, he has a strict incentive to \textit{report innocent}. By induction, this contagion argument shows that both agents report that the defendant is innocent with probability one in the unique equilibrium of this perturbed environment.

In general, agents may be biased in either direction (they may also receive large benefits from convicting the defendant) and with arbitrary magnitude. The planner faces uncertainty about the direction and magnitude of these biases as well as about agents' beliefs and higher-order beliefs about each other's biases. The planner aims to design a mechanism that approximately implements the desired objective
when agents are unbiased with probability close to $1$ but can have arbitrary biases with small but positive probability and may have arbitrary beliefs and higher-order-beliefs as long as those beliefs can be derived from a common prior.

\paragraph{Biased Uncertainty \& Status Quo Rule with Ascending Transfers:}

We now propose a mechanism that achieves this objective, maintaining for now the assumption that agents' cost of acquiring information is fixed and commonly known. We call this mechanism \textit{Status Quo Rule with Ascending Transfers}. This mechanism still features two messages for each agent: \textit{innocent} and \textit{guilty}. The outcome and transfers are specified as followss:
\begin{center}
\begin{tabular}{| c | c | c |}
\hline
outcome & innocent & guilty  \\
  \hline
  innocent & acquit & acquit \\
  \hline
  guilty & acquit & {\color{blue}{convict}}  \\
  \hline
\end{tabular}
\begin{tabular}{| c | c | c |}
\hline
transfers & innocent & guilty  \\
  \hline
  innocent & $R^1,R^1$ & $0,0$ \\
  \hline
  guilty & $0,0$ & {\color{blue}{$R^2,R^2$}}  \\
  \hline
\end{tabular}
\end{center}
where $R^2-R^1$ is strictly positive and bounded below by a function of $c$.

This mechanism features a status quo outcome, \textit{acquit}, which is implemented as long as one agent reports \textit{innocent}. The defendant is convicted if and only if both agents report \textit{guilty}. Agents receive positive transfers only when their reports coincide and they receive a larger transfer when both of them report \textit{guilty}.

To explain intuitively why this mechanism implements the desired objective, we restrict attention to the following class of perturbations and defer the general proof to Section \ref{sec4}. Nature draws $\omega$ from a countable set $\{\omega_0,\omega_1,\omega_2,...\}$ according to distribution $\Pi \in \Delta \{\omega_0,\omega_1,\omega_2,...\}$. Agent $1$'s information partition is $\{\omega_0\},\{\omega_1,\omega_2\},\{\omega_3,\omega_4\},...$ Agent 2's information partition is $\{\omega_0,\omega_1\},\{\omega_2,\omega_3\},...$ Agent 2's payoff is $t_2 - c \chi_2$ at every $\omega$. Agent 1's payoff is $t_1-c \chi_1$ at every $\omega$ except for $\omega_0$.

Each perturbation in the above class is characterized by $\Pi$ and agent 1's payoff function at $\omega_0$, under which we construct an equilibrium that approximately implements the desired outcome.
\begin{enumerate}
  \item Suppose first that when $\omega=\omega_0$, agent $1$ receives an arbitrarily large benefit from {\bf acquitting} the defendant. This biased type can guarantee his desired outcome by reporting \textit{innocent} no matter what. However, given that $R^1<R^2$, $\Pi(\omega_1)$ needs to be strictly less than $\Pi(\omega_0)$ for type $\{\omega_0,\omega_1\}$ of agent 2 to report \textit{innocent} no matter what, and $\Pi(\omega_2)$ needs to be strictly less than $\Pi(\omega_1)$ for type $\{\omega_1,\omega_2\}$ of agent 1 to report \textit{innocent} no matter what,... The upper bounds on these probabilities form a decaying geometric sequence, which means that the contagion caused by such a biased type is bounded from above by a linear function of $\Pi(\omega_0)$.
  \item Suppose now that when $\omega=\omega_0$, agent $1$ receives an arbitrarily large benefit from {\bf convicting} the defendant. If
  $\Pi(\omega_t)=\eta(1-\eta)^t$ for every $t \in \mathbb{N}$ and
  this biased type \textit{reports guilty no matter what}, then all types of both agents will \textit{report guilty no matter what}.
      However, under the outcome function of our mechanism,  the defendant is convicted only if \textit{both agents report guilty}, so the biased type has no incentive to report \textit{guilty} if the defendant is innocent and he believes that all other types report truthfully.
  In fact, this biased type has a strict incentive to be truthful under such a belief since
  when the defendant is innocent,
  reporting \textit{innocent}  leads to a strictly positive transfer and reporting guilty leads to zero transfer. Hence, all types report truthfully is an equilibrium no matter how large the bias is.
\end{enumerate}

\paragraph{Uncertainty about Cost \& Augmented Status Quo Rule:} In general, uncertainty about agents' preferences may also involve uncertainty about agents' costs of information, correlation between these costs and agents' biases, and agents' beliefs and higher-order-beliefs about each other's costs and biases.

We now describe a mechanism that approximately implements the desired objective when agents are unbiased and have cost $c$ with probability close to $1$ but can have arbitrary biases and costs with small probability.

Our mechanism robustly implements the desired objective as long as the prior probability of guilty of the defendant is not equal to $\frac{1}{2}$.\footnote{In the general setting of Section~\ref{sub4.2}, we assume that there is a state whose prior probability of occurrence exceeds that of every other state. This property is generic among the set of all prior distributions.} To fix ideas, we assume that the defendant's probability of guilty is strictly less than $\frac{1}{2}$ and we focus for expositional simplicity on the class of information structures considered before: agent 1's information partition is
$\{\omega_0\},\{\omega_1,\omega_2\},\{\omega_3,\omega_4\},...$ and agent 2's information partition is $\{\omega_0,\omega_1\},\{\omega_2,\omega_3\},...$ We further assume that agent $2$'s payoff function is $t_2-c\chi_2$ at every $\omega$ and that agent $1$'s payoff function is $t_1-c\chi_1$ at every $\omega$ except at $\omega_0$, where it is given by $u_1(\theta,y)+t_1-\widetilde{c} \chi_1$ for some arbitrary function $u_1(\theta,y)$ and cost $\widetilde{c}$.

We start by explaining why the status quo rule with ascending transfers \textit{cannot} implement the desired outcome if agent 1's payoff at $\omega_0$ includes a large benefit from convicting the defendant and a very large cost from acquiring information. Intuitively, when such a high-cost biased type believes that the other agent reports truthfully, he prefers to report \textit{guilty} conditional on the defendant being guilty, since he benefits from convicting the defendant.
If this type reports \textit{innocent} when the defendant is innocent, then he needs to conduct an investigation, but his cost of doing so outweighs his benefit from the transfers. This explains why the high-cost biased type prefers to \textit{report guilty no matter what}. This causes contagion when $\Pi(\omega_t)=\eta(1-\eta)^t$ for every $t \in \mathbb{N}$ no matter how small $\eta$ is.

We now propose a mechanism called the \textit{Augmented Status Quo Rule with Ascending Transfers}, in which each agent has a third message available, which we denote $-$\textit{guilty}. The outcome and transfers of this mechanism are given by:
\begin{center}
\begin{tabular}{| c | c | c | c |}
\hline
  outcome & $-$guilty & innocent & guilty \\
  \hline
  $-$guilty & {\color{blue}{convict}} & acquit & {\color{blue}{convict}}\\
  \hline
  innocent & acquit & acquit & acquit \\
  \hline
  guilty &  {\color{blue}{convict}} & acquit & {\color{blue}{convict}} \\
  \hline
\end{tabular}
\begin{tabular}{| c | c | c | c |}
\hline
  transfers & $-$guilty & innocent & guilty \\
  \hline
  $-$guilty & $R^0, R^0$ & $R^0,R^0$ & $0,0$\\
  \hline
  innocent & $R^0,R^0$ & ${\color{blue}{R^1,R^1}}$ & $0,0$ \\
  \hline
  guilty & $0,0$ & $0,0$ & ${\color{blue}{R^2,R^2}}$ \\
  \hline
\end{tabular}
\end{center}
where $\frac{R^0}{R^2} \approx 1$ and $R^2-R^1$ and $R^1-R^0$ are positive and bounded below by some function of $c$.

We note that the message \textit{$-$guilty} implements the same outcome as message \textit{guilty} regardless of the other agent's message. Moreover, each agent can unilaterally implement the status quo outcome \textit{acquit} by reporting \textit{innocent}. Finally, coordinating on the message \textit{$-$guilty} leads to a lower transfer compared to coordinating on any of the other two messages.

To understand why the message \textit{$-$guilty} makes our mechanism robust to biased types that have high learning costs, suppose that every non-biased type never reports \textit{guilty} when the defendant is innocent (but could report $-$\textit{guilty} or \textit{innocent} with arbitrary probabilities). then, agent 1's type $\omega_0$ receives an expected transfer of at least  $(1-q)R^0$ if he reports \textit{$-$guilty} and an expected transfer of at most $q R^2$ if he reports \textit{guilty}, where $q$ is the ex ante probability that the defendant is guilty. Given our assumptions that the defendant is innocent with probability above $\frac{1}{2}$ (i.e., $q<\frac{1}{2}$) and that $\frac{R^0}{R^2} \approx 1$, reporting \textit{$-$guilty} leads to a higher expected transfer.

Moreover, the assumption that $0<R^0<R^1<R^2$ implies that coordinating on message $-$\textit{guilty} leads to a lower transfer compared to coordinating on message \textit{innocent} and coordinating on message \textit{guilty}. As a result, every type of agent who is unbiased and whose cost is $c$ strictly prefers to conduct the costly investigation and report his finding truthfully as long as his belief satisfies the following two conditions: (1) no type of the other agent reports \textit{guilty} when the defendant is innocent, and (2) the other agent is truthful with probability at least $\frac{1}{2}$.

\section{General Model}\label{sec3}
\paragraph{Primitives:} A planner wants to implement a social choice function $f: \Theta \rightarrow \Delta (Y)$ where
$\Theta$ is a finite set of states and $Y$ is a finite set of outcomes. The typical elements in these sets are $\theta \in \Theta$ and $y \in Y$. Let $q(\theta)$ be the probability of state $\theta$. We assume that $q(\theta)>0$ for every $\theta \in \Theta$.

The planner does not know $\theta$ and elicits information about $\theta$ from two agents.
She commits to a mechanism $\mathcal{M} \equiv \{M_1,M_2,t_1,t_2,g\}$, where $M_i$ is a finite set of messages for agent $i \in \{1,2\}$, $t_i: M_1 \times M_2 \rightarrow \mathbb{R}$ is the transfer to agent $i$, and $g: M_1 \times M_2 \rightarrow \Delta (Y)$ is the implemented outcome.

After observing $\mathcal{M}$, agents simultaneously and independently decide whether to observe $\theta$ at some costs.
Let $\chi_i \in \{0,1\}$ be agent $i$'s decision to obtain information, where $\chi_i=1$ represents agent $i$ obtaining information about $\theta$ and vice versa. Let
$c_i \geq 0$ be his cost.
We assume that information acquisition is \textit{covert} in the sense that neither agent $-i$ nor the planner can observe $\chi_i$.

Then the agents simultaneously send messages $(m_1,m_2) \in M_1 \times M_2$ to the planner, after which the planner makes transfers and implements an outcome according to $\mathcal{M}$. Agent $i$'s payoff is:
\begin{equation}\label{2.1}
    t_i-\chi_i c_i+u_i(\theta,y).
\end{equation}
Following Lipnowski and Ravid (2020), we say that agent $i$ has \textit{transparent motives} if $u_i(\theta,y)$ does not depend on $\theta$.

One assumption to highlight is that the agents' transfers \textit{cannot} depend on the realized state. This stands in contrast to existing works on contracting with costly information acquisition, such as Zermeno (2011), Carroll (2019), and Clark and Reggiani (2021) where transfers can also depend on the realized state. Our model fits situations where either the principal cannot verify the state ex post, or additional information about the state takes a long time to arrive so that rewarding agents based on such late information is impractical.

\paragraph{Robustness Concerns:} We examine whether the planner can \textit{robustly} implement $f$ when agents' preferences over outcomes, costs of obtaining information, and their beliefs and higher-order-beliefs about each other's payoffs can be different from what the planner believed to be.
We focus on partial implementation: the planner only requires $f$ to be implemented in \textit{one} equilibrium.

Following Kajii and Morris (1997), a \textit{perturbation}
\begin{equation}\label{2.2}
\mathcal{G} \equiv \{\Omega,\Pi,Q_1,Q_2,\widetilde{u}_1,\widetilde{u}_2,\widetilde{c}_1,\widetilde{c}_2\}
\end{equation}
consists of  a countable set of \textit{circumstances} $\Omega$, a distribution $\Pi \in \Delta (\Omega)$ over the set of circumstances which we assume is independent of $\theta$, a partition $Q_i$ of $\Omega$ such that agent $i \in \{1,2\}$ knows
which element of the partition $Q_i$ the realized $\omega$ belongs to, and mappings
$\widetilde{u}_i: \Omega \times \Theta \times Y \rightarrow \mathbb{R}$ and $\widetilde{c}_i: \Omega \rightarrow \mathbb{R}_+$ for $i \in \{1,2\}$.
Agent $i$'s payoff under perturbation $\mathcal{G}$ is
\begin{equation}\label{2.3}
    t_i-\widetilde{c}_i (\omega) \chi_i + \widetilde{u}_i (\omega,\theta,y).
\end{equation}
For every $\overline{c} \geq \max \{c_1,c_2\}$, $\mathcal{G}$ is a $\overline{c}$-bounded perturbation if $\widetilde{c}_i(\omega) \leq \overline{c}$ for every $i$ and $\omega \in \Omega$.

For every $\omega \in \Omega$, let $Q_i(\omega)$ be the partition element of $Q_i$ that contains $\omega$, which we call agent $i$'s \textit{type}.
Type $Q_i(\omega)$ is a \textit{normal type} if $\widetilde{u}_i(\omega',\theta,y)=u_i(\theta,y)$ and $\widetilde{c}_i(\omega')=c_i$ for every $\omega' \in Q_i(\omega)$, i.e., this type of agent $i$ knows that his payoff in the perturbed environment coincides with his payoff in the unperturbed environment.
For every $\eta \in [0,1]$, we say that $\mathcal{G}$ is an \textit{$\eta$-perturbation} if
\begin{equation}\label{2.4}
    \Pi \Big(\textrm{both agents are normal types}\Big) \geq 1-\eta.
\end{equation}
The class of perturbations considered in our leading example are $\eta$-perturbations since both agents are normal types when $\omega \in \Omega \backslash \{\omega_0\}$, and event $\Omega \backslash \{\omega_0\}$ occurs with probability $1-\eta$.

Intuitively, a perturbation is \textit{small} if agents' payoffs coincide with those in the unperturbed environment with probability close to one, but their payoffs can be very different from those in the unperturbed environment with low but positive probability.
Although a normal type's payoff coincides with that in  the unperturbed environment,
he may believe that the other agent is not normal, and may believe that the other agent thinks that he is not normal, and so on.

A perturbation $\mathcal{G}$ and a mechanism $\mathcal{M} \equiv \{M_1,M_2,g,t_1,t_2\}$ induce an incomplete information game between the two agents, which we denote by $(\mathcal{M},\mathcal{G})$. A typical strategy profile is denoted by $\sigma$.
Let $g_{\sigma}(\theta) \in \Delta (Y)$ be the implemented outcome conditional on the state being $\theta$ when the planner commits to outcome function $g$ and agents behave according to strategy profile $\sigma$.

We introduce two notions of \textit{local robust implementation}, that is, the planner designs a mechanism that approximately implements $f$ for \textit{all} small enough perturbations.
\begin{enumerate}
  \item  We say that $\mathcal{M}$ \textit{robustly implements} $f$ if for every $\varepsilon>0$, there exists $\eta>0$ such that for every $\eta$-perturbation $\mathcal{G}$, there exists an equilibrium $\sigma(\mathcal{G})$ of the incomplete information game induced by $(\mathcal{M},\mathcal{G})$,
such that
\begin{equation}\label{2.5}
\max_{\theta \in \Theta} ||    g_{\sigma(\mathcal{G})}(\theta)-f(\theta) ||_{\textrm{TV}} < \varepsilon,
\end{equation}
where $||\cdot||_{TV}$ is the total variation distance between two distributions.
  \item We say that $\mathcal{M}$ \textit{robustly implements $f$} for all \textit{$\overline{c}$-bounded perturbations} if for every $\varepsilon>0$, there exists $\eta>0$ such that for every $\overline{c}$-bounded $\eta$-perturbation $\mathcal{G}$, there exists an equilibrium $\sigma(\mathcal{G})$ of the incomplete information game induced by $(\mathcal{M},\mathcal{G})$ such that inequality (\ref{2.5}) holds.
\end{enumerate}
By definition, our two notions of robust implementation differ only in terms of whether we allow for unbounded costs of learning in perturbed environments.
If $\mathcal{M}$ robustly implements $f$, then it robustly implements $f$ for all $\overline{c}$-bounded perturbations for every $\overline{c} \geq \max\{c_1,c_2\}$.

We also consider two notions of \textit{global robust implementation}, namely, the planner designs a mechanism $\mathcal{M}$ that approximately implements $f$ under all perturbations, including those where agents' payoffs are different from their payoffs in the unperturbed environment with high probability.
\begin{enumerate}
  \item Mechanism $\mathcal{M}$ \textit{globally implements} $f$ if for every $\varepsilon>0$ and every perturbation $\mathcal{G}$, there exists an equilibrium $\sigma(\mathcal{G})$ of incomplete information game $(\mathcal{M},\mathcal{G})$ such that inequality (\ref{2.5}) holds.
  \item Mechanism $\mathcal{M}$ \textit{globally implements $f$ for all $\overline{c}$-bounded perturbations} if for every $\varepsilon>0$ and every $\overline{c}$-bounded perturbation $\mathcal{G}$, there exists an equilibrium $\sigma(\mathcal{G})$ of incomplete information game $(\mathcal{M},\mathcal{G})$ such that inequality (\ref{2.5}) holds.
\end{enumerate}
By definition, if $\mathcal{M}$ globally implements $f$ (for all $\overline{c}$-bounded perturbations), then it robustly implements $f$
(for all $\overline{c}$-bounded perturbations). In Section \ref{sub4.4}, we show that for any $\overline{c} \geq \max\{c_1,c_2\}$ and any social choice function $f$ that depends non-trivially on the state, there is no finite mechanism that can globally implement $f$ for all $\overline{c}$-bounded perturbations.

\paragraph{Remark:} Our formulation restricts attention to perturbations where agents have quasi-linear utility functions and their payoffs do not directly depend on their messages (which are their actions in our model).\footnote{Although our baseline model restricts attention to quasi-linear payoff functions, our main result generalizes to situations where agents' utilities from transfers are $v_1(t_1)$ and $v_2(t_2)$ in the unperturbed environment, and are $\widetilde{v}_1(\omega,t_1)$ and $\widetilde{v}_2(\omega,t_2)$ in the perturbed environment, as long as there exists a positive constant $\delta$ such that $v_1'(t_1),v_2'(t_2), \frac{\partial \widetilde{v}_1(\omega,t_1)}{\partial t_1}, \frac{\partial \widetilde{v}_2(\omega,t_2)}{\partial t_2}\geq \delta$ for all $\omega,t_1,t_2$. That is, it is common knowledge that agents' marginal utilities from transfers is bounded from below.} Both assumptions are common in the mechanism design literature, including Rochet (1987), Bergemann and Morris (2005, 2009), and Chung and Ely (2007).

Our assumptions stand in contrast to models on robust prediction in games such as Kajii and Morris (1997) and Ui (2001) where players' actions can directly affect their payoffs. Since we consider a mechanism design setting, agents' message spaces are endogenously chosen by the planner, so these messages have no meaning per se and can be viewed as cheap talk.
In many applications, it is also reasonable to assume that all types of the agent prefer more transfers, while agents' preferences over other aspects of the allocation (e.g., whether to convict or acquit the defendant) are more subtle and may not be known to the mechanism designer.


A similar perspective is shared by Oury and Tercieux (2012), Chen, Mueller-Frank and Pai (2020) and Chen, Kunimoto and Sun (2020),\footnote{Oury and Tercieux (2012) write on page 1607 that ``\textit{These works (papers on robust prediction in games) share the common assumption that, in perturbed models, some types may have preferences over action profiles that are radically different from those of types in the initial model... Note that the meaning of such an assumption in the mechanism design context where the social planner fixes the game form is problematic}''.} who use an \textit{interim} approach to study robust partial implementation where agents' messages are assumed to be cheap talk.
These papers examine whether there exists a mechanism that partially implements a desired social choice function for \textit{all} nearby interim types.
By contrast, we take an \textit{ex ante} approach and examine whether the planner can robustly implement a desired social choice function with probability close to one when she knows that agents' beliefs are derived from a common prior and that the agents' payoff functions
coincide with those in her model with probability close to one.

\section{Main Results}\label{sec4}
Theorem \ref{Theorem1} shows that every $f$ is robustly implementable when agents' costs of learning are bounded from above.
Theorem \ref{Theorem2} shows that even when agents' costs of learning can be arbitrarily large, every $f$ is robustly implementable under a generic assumption on the state distribution. Theorem \ref{Theorem3} extends our robust implementation results
when agents' signals about the state are noisy.


\subsection{Robust Implementation under Bounded Cost}\label{sub4.1}
We show that every $f$ is robustly implementable when agents' costs of obtaining information is uniformly bounded from above.
\begin{Theorem}\label{Theorem1}
For every $\overline{c} \geq \max\{c_1,c_2\}$ and $f:\Theta \rightarrow \Delta (Y)$, there exists a finite mechanism $\mathcal{M}$ that robustly implements $f$ for all $\overline{c}$-bounded perturbations.
\end{Theorem}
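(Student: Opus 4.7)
The plan is to construct a finite mechanism that generalizes the Augmented Status Quo Rule with Ascending Transfers from Section \ref{sec2} to an arbitrary finite state space $\Theta$, and to exhibit, for every sufficiently small $\eta$-perturbation, an equilibrium in which every normal type investigates the state and reports it truthfully via a designated ``main'' message.

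\textbf{Mechanism construction.} Designate one state $\theta_0 \in \Theta$ as the status quo. For every $\theta \in \Theta$, introduce a main message $m_\theta$ and an auxiliary message $m_\theta^-$, so that $|M_i| = 2|\Theta|$. The outcome function implements $f(\theta)$ whenever both agents' reports lie in $\{m_\theta, m_\theta^-\}$ for the same $\theta$, and implements the status quo $f(\theta_0)$ otherwise. The transfer schedule is ascending: the pair $(m_\theta, m_\theta)$ yields $R^\theta$ to each agent, matches involving the auxiliary message yield a strictly smaller $R^{\theta,-}$, and all remaining profiles yield $0$. The gaps $R^\theta - R^{\theta,-}$ and $R^{\theta,-}$ itself will be chosen as functions of $\overline{c}$, $c_i$, and $\max_{\theta,y,i}|u_i(\theta,y)|$ so that two incentive conditions can be met: (i) every normal type strictly prefers to pay $c_i$ and report truthfully via the main message, and (ii) every biased type weakly prefers an auxiliary option that does not disturb the status quo when his counterpart is a normal type sending a main message.

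\textbf{Equilibrium and implementation error.} Consider the strategy profile $\sigma^*$ in which each normal type investigates and sends $m_{\hat\theta}$ upon observing $\hat\theta$, while each non-normal type plays a best response to $\sigma^*$. Existence of such a fixed point follows from standard arguments for finite-action games on countable type spaces, provided one verifies that truthful reporting is a best response for normal types. Here the bounded-cost hypothesis $\widetilde c_i(\omega) \leq \overline{c}$ is crucial: by scaling the transfer gaps to exceed $\overline{c} + \max_{\theta,y,i}|u_i(\theta,y)|$, one ensures that neither the saving from forgoing investigation nor the outcome manipulation available via any alternative message can overturn a normal type's strict preference for truthful reporting, no matter how distorted his interim belief about the counterpart may be. On the event that both agents are normal --- which by (\ref{2.4}) has $\Pi$-probability at least $1 - 2\eta$ --- the outcome of $\sigma^*$ coincides with $f(\theta)$ in state $\theta$; on the complement the outcome may be arbitrary. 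Hence $\max_\theta \|g_{\sigma^*}(\theta) - f(\theta)\|_{\text{TV}} \leq 2\eta$, and choosing $\eta < \varepsilon/2$ yields the theorem.

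\textbf{Main obstacle.} The hard step is verifying robustness of normal types' incentives: Kajii--Morris-style contagion permits a normal type to place high conditional probability on his counterpart being biased in a disruptive way, and would then tempt him to imitate that bias rather than report truthfully. The auxiliary messages $m_\theta^-$ are precisely what prevents this: by giving every biased type a non-disruptive option that is at least as appealing to him as a disruptive deviation, the contagion breaks at its origin, so normal types face at most a bounded distortion regardless of the distribution over non-normal strategies. Making this rigorous requires a case analysis generalizing the two bias directions illustrated in Section \ref{sec2}, combined with the bounded-cost assumption, which allows the planner to calibrate the transfer gaps uniformly across all $\overline{c}$-bounded perturbations.
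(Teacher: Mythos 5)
There is a genuine gap at the heart of your argument. You assert that by scaling the transfer gaps above $\overline{c}+\max_{\theta,y,i}|u_i(\theta,y)|$, a normal type strictly prefers truthful reporting ``no matter how distorted his interim belief about the counterpart may be.'' This cannot be right: transfers in any status-quo-rule mechanism are paid only when reports \emph{match}, so a normal type's best reply is intrinsically belief-dependent. If a normal type assigns high probability to his counterpart sending the status quo message regardless of the state (because the counterpart is biased, or believes the first agent is biased, and so on), then matching on the status quo is his best reply no matter how large the rewards are --- this is exactly the contagion illustrated in Section \ref{sec2}. Consequently your proposed profile $\sigma^*$, in which every normal type is truthful by fiat and only non-normal types best respond, is not an equilibrium in general, and your bound $\|g_{\sigma^*}(\theta)-f(\theta)\|_{\mathrm{TV}}\leq 2\eta$ does not follow. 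The paper's proof handles this with machinery you omit entirely: it first restricts strategies to a set $\Sigma^*$ (status quo message or truth), proves that mutual truth-telling is a $\gamma$-dominant equilibrium of the restricted unperturbed game for some $\gamma<\tfrac{1}{2}$, and then invokes the critical path lemma of Kajii and Morris (1997) to obtain an equilibrium of the perturbed restricted game in which truth-telling occurs with probability $1-\varepsilon$; a separate third step shows this remains an equilibrium when the restriction to $\Sigma^*$ is lifted. Without the $\gamma$-dominance plus critical-path step (or an equivalent contagion bound), the claim that normal types behave truthfully with high probability is unsupported.

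A second, related problem is your claim that every biased type can be given ``a non-disruptive option that is at least as appealing to him as a disruptive deviation,'' so that ``the contagion breaks at its origin.'' For arbitrary biases $u_1(\theta,y)$ of arbitrary magnitude this is impossible: a type with a huge benefit from the status quo outcome will send the status quo message regardless of any transfer scheme, and that behavior \emph{is} the origin of the contagion. The paper does not eliminate such behavior; it bounds its propagation (the ascending transfers $R^j>R^1$ force the probabilities along any contagion chain to decay geometrically) and neutralizes the opposite bias through the outcome function (implementing a non-status-quo outcome requires both agents to agree). Finally, note that for Theorem \ref{Theorem1} the paper uses the plain Status Quo Rule with $n$ messages; the auxiliary (negative) messages you introduce are only needed in Theorem \ref{Theorem2}, where learning costs are unbounded. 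Using them here is not wrong, but the work they are doing in your sketch is not the work that actually needs to be done.
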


For illustration purposes, we show this result (as well as Theorems \ref{Theorem2} and \ref{Theorem3}) in the leading example where
$u_i(\theta,y)=0$ for $i \in \{1,2\}$ and $c_1=c_2=c$, i.e., each normal type's payoff equals his transfer minus his cost of learning the state and the normal types of both agents face the same cost $c$. Extending our proof to general $(u_1,u_2)$ and heterogenous costs is straightforward.

\paragraph{Status Quo Rule:} Let $n \equiv |\Theta|$ and $\Theta \equiv \{\theta^1,...,\theta^n\}$. Each agent has $n$ messages $M_1=M_2 \equiv M \equiv \{1,2,...,n\}$. The outcome function is
\begin{equation}
g(m_1,m_2) = \left\{ \begin{array}{ll}
f(\theta^{m_1}) & \textrm{ if } m_1=m_2 \\
f(\theta^1) & \textrm{ otherwise. }
\end{array} \right.
\end{equation}
The transfer function for agent $i \in \{1,2\}$ is
\begin{equation}
t_i(m_i,m_{-i}) = \left\{ \begin{array}{ll}
R^j & \textrm{ if } m_1=m_2=j \\
0 & \textrm{ otherwise, }
\end{array} \right.
\end{equation}
where $R^n,...,R^1 >0$ and
$R^j \geq R^1+\frac{2\overline{c}}{q(\theta^j)}$ for every $j \geq 2$.

In the unperturbed game induced by our mechanism, an agent's pure strategy is an $n$-dimensional vector $(m^1,...,m^n)$ where $m^j \in M$ is the message he sends when the state is $\theta^j$. In order to capture agents' decisions to obtain information, each agent pays a penalty $c$ when he chooses a non-constant vector.
Let $\Sigma \equiv \{1,2,...,n\}^n$ be the set of pure strategies. An agent is \textit{truthful} if he uses strategy $(1,2,...,n)$, that is, he reports the index of the realized state. Our proof proceeds in three steps.

\paragraph{Step 1: Restricted Game without Perturbation} We start from examining a game \textit{without} any perturbation where agents are only allowed to use strategies in $\Delta (\Sigma^*)$, where $\Sigma^* \subset \Sigma$ is a subset of pure strategies defined as:
\begin{equation}
\Sigma^* \equiv \Big\{ (m^1,...,m^n) \in \Sigma \textrm{ such that } m^j \in \{1,j\} \textrm{ for every } j \geq 1 \Big\} .
\end{equation}
In this auxiliary game, each agent is only allowed to send the status quo message $1$ or truthfully report the state.
For example, when $n=2$, $\Sigma^* = \{(1,1),(1,2)\}$ while $\Sigma =\{(1,1),(1,2),(2,1),(2,2)\}$.
\begin{Lemma}\label{L1}
There exists $\gamma < \frac{1}{2}$ such that both agents being truthful is a $\gamma$-dominant equilibrium in the
restricted game without any perturbation.
\end{Lemma}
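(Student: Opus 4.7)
The plan is to parametrize strategies in $\Sigma^{*}$ and verify the $\gamma$-dominance condition by a direct payoff comparison. Since $m^{1}$ is forced to equal $1$ for every $(m^{1},\ldots,m^{n})\in\Sigma^{*}$, such a strategy is determined by $(m^{2},\ldots,m^{n})$ with $m^{j}\in\{1,j\}$; the truthful strategy corresponds to $m^{j}=j$ for all $j\geq 2$, while the constant status-quo strategy $(1,\ldots,1)$ corresponds to $m^{j}=1$ for all $j\geq 2$.

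First I would fix an arbitrary belief $\mu$ of agent $i$ over agent $-i$'s strategies in $\Sigma^{*}$ with $\mu(\text{truthful})\geq\gamma$, and set $p_{j}:=\Pr_{\mu}(\sigma_{-i}^{j}=j)$, which satisfies $p_{j}\geq\gamma$ for every $j\geq 2$ since truthful puts mass $1$ on $\sigma^{j}=j$. A direct expected-transfer computation then yields
\[
\mathbb{E}_{\mu}U_{i}(m)=q(\theta^{1})R^{1}+\sum_{j\geq 2}q(\theta^{j})\Big[\mathbf{1}\{m^{j}=j\}R^{j}p_{j}+\mathbf{1}\{m^{j}=1\}R^{1}(1-p_{j})\Big]-c\cdot\mathbf{1}\{m \text{ non-constant}\},
\]
so the payoff gap between the truthful strategy and any $m\in\Sigma^{*}$ equals
\[
\Delta(m):=\sum_{j\geq 2:\, m^{j}=1}q(\theta^{j})\big[R^{j}p_{j}-R^{1}(1-p_{j})\big]-c\cdot\mathbf{1}\{m=(1,\ldots,1)\}.
\]

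Next I would split into two cases. If $m$ is non-constant and differs from truthful, then $m$ flips $m^{j}$ from $j$ to $1$ on at least one coordinate $j\geq 2$, the cost indicator vanishes, and each summand of $\Delta(m)$ is nonnegative as soon as $p_{j}\geq R^{1}/(R^{1}+R^{j})$. Since $R^{j}>R^{1}$ for every $j\geq 2$, the threshold $\gamma_{1}:=\max_{j\geq 2}R^{1}/(R^{1}+R^{j})$ is strictly below $1/2$, so any $\gamma\geq\gamma_{1}$ handles this case. If instead $m$ is the constant status-quo strategy, the bound $q(\theta^{j})(R^{j}-R^{1})\geq 2\overline{c}$ from the construction, together with monotonicity of each summand in $p_{j}$, gives at $\gamma=1/2$
\[
\Delta(m)\geq\sum_{j\geq 2}q(\theta^{j})\cdot\frac{R^{j}-R^{1}}{2}-c\geq (n-1)\overline{c}-c\geq 0,
\]
which can be made strictly positive by choosing $R^{j}$ slightly above the lower bound $R^{1}+2\overline{c}/q(\theta^{j})$ (allowed by the construction). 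Continuity in $\gamma$ then produces some $\gamma_{2}<1/2$ for which $\Delta(m)\geq 0$ in this case.

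Taking $\gamma:=\max(\gamma_{1},\gamma_{2})<1/2$ then delivers $\gamma$-dominance of the truthful profile. I expect the main obstacle to be the constant-strategy case: there the cost $c$ of acquiring information enters as a negative hurdle that must be absorbed by the reward differentials $R^{j}-R^{1}$, and the condition $R^{j}\geq R^{1}+2\overline{c}/q(\theta^{j})$ is calibrated precisely so that this absorption remains possible even when the opponent is believed truthful with probability strictly below $1/2$.
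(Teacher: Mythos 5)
Your proof is correct and follows essentially the same route as the paper's: a direct comparison of expected transfers showing that truthfulness is a best reply whenever the opponent is believed truthful with probability at least some $\gamma<\tfrac{1}{2}$, using $R^j>R^1$ against non-constant deviations and the gap $R^j\geq R^1+\tfrac{2\overline{c}}{q(\theta^j)}$ to absorb the learning cost against the constant status-quo deviation. If anything, your accounting of the one-time cost $c$ --- which enters only when comparing against the constant strategy $(1,\ldots,1)$ --- is more explicit than the paper's state-by-state argument, which applies the cost condition separately in each state.
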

\begin{proof}
Conditional on $\theta=\theta^j$,
\begin{itemize}
  \item if agent 1 sends message $j$, his expected transfer equals $\Pr(m_2=j|\theta^j) R^j$;
  \item if agent 1 sends message $1$, his expected transfer equals $\Pr(m_2=1|\theta^j) R^1$.
\end{itemize}
If agent 2's strategy belongs to $\Delta (\Sigma^*)$ and agent 2 is truthful with probability at least $\frac{1}{2}$, we have $\Pr(m_2=j|\theta^j) \geq \frac{1}{2}$, so $\Pr(m_2=j|\theta^j) R^j \geq \Pr(m_2=1|\theta^j) R^1$ given the condition that $R^j>R^1$. Since $R^j>R^1+\frac{2\overline{c}}{q(\theta^j)}$, agent 1 strictly prefers to send message $j$ to any $m_1 \leq 1$ in state $\theta^j$
as long as his cost of observing $\theta$ is no more than $\overline{c}$.
Since agent 1's incentives are strict when he believes that agent 2 is truthful with probability at least $\frac{1}{2}$, there exists $\gamma < \frac{1}{2}$ such that agent 1 strictly prefers to use strategy $(1,2,...,n)$ to other strategies in $\Sigma^*$ when agent $2$'s strategy belongs to $\Delta (\Sigma^*)$ and is truthful with probability at least $\gamma$.
\end{proof}

\paragraph{Step 2: Restricted Game with Perturbation} For any perturbation $\mathcal{G}$, consider a \textit{restricted perturbed game} where type $Q_i(\omega)$ of agent $i$'s payoff is $\widetilde{u}_i(\omega,\theta,y)+t_i-c \chi_i$ and all types of both agents are only allowed to use strategies in $\Delta(\Sigma^*)$. Since both agents being truthful is a $\gamma$-dominant equilibrium in the unperturbed restricted game for some $\gamma< \frac{1}{2}$, the critical path lemma in Kajii and Morris (1997) implies the following result.
\begin{Lemma}\label{L2}
For every $\varepsilon>0$, there exists $\eta>0$, such that for every $\eta$-perturbation $\mathcal{G}$,
there exists an equilibrium
$\sigma(\mathcal{G})$ when the environment is perturbed by $\mathcal{G}$ and all types of both agents are only allowed to use strategies in $\Delta (\Sigma^*)$ such that
the probability with which both agents are truthful in this equilibrium is at least $1-\varepsilon$.
\end{Lemma}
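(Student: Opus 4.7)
My plan is to reduce Lemma \ref{L2} to a direct application of the critical path lemma of Kajii and Morris (1997). Lemma \ref{L1} provides the crucial input: truthful reporting is a $\gamma$-dominant equilibrium of the unperturbed restricted game for some $\gamma < \tfrac{1}{2}$, which gives $2\gamma < 1$. This is the two-agent specialization of the Kajii--Morris robustness condition $\sum_i p_i < 1$, and it is exactly what their theorem requires in order to guarantee that nearby incomplete information games admit equilibria close to the dominant one.

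To invoke their result, I would treat the restricted perturbed game $(\mathcal{M},\mathcal{G})$ as an elaboration of the unperturbed restricted game, with each element of $\Sigma^*$ serving as an ``action'' for the corresponding agent. Since the restricted perturbed game fixes the cost at $c$ for every type, the only departure from the unperturbed payoffs is the replacement of $u_i(\theta,y)$ by $\widetilde{u}_i(\omega,\theta,y)$, and this coincides with $u_i$ for normal types by definition. The event that at least one agent is non-normal has $\Pi$-probability at most $\eta$ by (\ref{2.4}), so it plays the role of the ``crisis set'' in the Kajii--Morris framework.

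Given this translation, the critical path argument constructs a Bayesian Nash equilibrium $\sigma(\mathcal{G})$ of the restricted perturbed game in which the ex ante probability that both agents are truthful exceeds $1 - K \eta$, where $K$ depends only on $\gamma$ and is finite because $2\gamma < 1$. Concretely, one starts from a candidate profile in which every normal type is truthful and every abnormal type plays some arbitrary element of $\Sigma^*$, and then iterates best responses. The $\gamma$-dominance bound controls the mass of normal types for whom truthful play fails to be a best response at each iteration by a contraction factor strictly less than one plus a term absorbing the crisis mass, so the total mass of deviators is geometrically bounded by a multiple of $\eta$; a standard fixed-point argument then delivers the equilibrium. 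Setting $\eta \leq \varepsilon / K$ yields the claim.

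The main obstacle is the minor verification that our setting fits the Kajii--Morris framework. Two features deserve attention: (i) strategies in $\Sigma^*$ are contingent message plans rather than primitive actions, but the proof of Lemma \ref{L1} establishes $\gamma$-dominance \emph{conditional on each state $\theta^j$}, so it lifts unchanged to the ex ante game over $\Sigma^*$; and (ii) normal types must face exactly the unperturbed payoffs, which holds because the restricted game pays cost $c$ to every type and because $\widetilde{u}_i$ coincides with $u_i$ on any partition cell $Q_i(\omega)$ corresponding to a normal type. Once these points are checked, the critical path lemma applies and yields the required equilibrium.
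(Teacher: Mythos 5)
Your proposal is correct and takes essentially the same route as the paper, which proves Lemma~\ref{L2} in a single sentence by combining the $\gamma$-dominance from Lemma~\ref{L1} (with $\gamma<\tfrac{1}{2}$, hence $2\gamma<1$) with the critical path lemma of Kajii and Morris (1997). The additional points you verify---that elements of $\Sigma^*$ serve as the ``actions'' of the elaboration and that normal types face exactly the unperturbed payoffs---are precisely the translation steps the paper leaves implicit, and they are consistent with how the paper defines the restricted perturbed game.
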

Since $g(j,j)=f(\theta^j)$ for every $j \in \{1,2,...,n\}$,
$f$ is implemented with probability more than $1-\varepsilon$ if both agents behave according to $\sigma(\mathcal{G})$. What remains to be verified is that $\sigma(\mathcal{G})$ remains an equilibrium when agents can use any strategy in $\Delta (\Sigma)$.

\paragraph{Step 3: Unrestricted Game with Perturbation} We show that for every $\mathcal{G}$, $\sigma(\mathcal{G})$ remains an equilibrium under perturbation $\mathcal{G}$ when agents can use any strategy in $\Delta (\Sigma)$. Suppose by way of contradiction that there exists a type $Q_1(\omega)$ of agent $1$ who strictly prefers strategy $(m^1,...,m^n) \notin \Sigma^*$ to any strategy in $\Sigma^*$ when agent 2 behaves according to $\sigma(\mathcal{G})$. Define a new strategy $(m_*^1,...,m_*^n)$ such that
  \begin{equation*}
m_{*}^j \equiv \left\{ \begin{array}{ll}
m^j & \textrm{ if } m^j \in \{1,j\}\\
1 & \textrm{ if } m^j \notin \{1,j\}
\end{array} \right. \textrm{ for every } j \in \{1,2,...,n\}.
\end{equation*}
By construction, $(m_*^1,...,m_*^n) \in \Sigma^*$. We compare type $Q_1(\omega)$'s expected payoff from $(m^1,...,m^n)$ and from $(m_*^1,...,m_*^n)$. First, $(m^1,...,m^n)$ and $(m_*^1,...,m_*^n)$ lead to the same joint distribution of $(\theta,y)$ given that agent 2's strategy belongs to $\Delta (\Sigma^*)$, since conditional on $\theta=\theta^j$, agent $2$ either sends either $1$ or $j$, so the implemented outcome is $f(\theta^1)$ when agent $1$'s message is neither $1$ nor $j$. Second, $(m_*^1,...,m_*^n)$ leads to weakly greater transfers conditional on each state since the transfer is $0$ when agent $1$ sends message $m^j \notin \{1,j\}$ in state $\theta^j$ given that agent 2's message belongs to $\{1,j\}$. Third, $(m_*^1,...,m_*^n)$ leads to strictly greater transfers compared to $(m^1,...,m^n)$ when $(m_*^1,...,m_*^n)$ requires strictly greater learning cost. To see this, note that $(m_*^1,...,m_*^n)$ requires a greater cost only if $m^1=...=m^n$ and, since $(m^1,...,m^n) \notin \Sigma^*$, it must be that $m^1=...=m^n  \geq 2$. As a result, $(m_*^1,...,m_*^n)$ leads to strictly greater transfer conditional on state $\theta^1$, and the expected increase in transfer is at least $R^1 q(\theta^1)$, which is strictly greater than the maximal cost of learning $\overline{c}$. This suggests that every type of agent 1 prefers $(m_*^1,...,m_*^n)$ to $(m^1,...,m^n)$, which leads to a contradiction. Hence, $\sigma(\mathcal{G})$ remains an equilibrium when agents can use any strategy in $\Delta (\Sigma)$.

\subsection{Robust Implementation under Generic State Distribution}\label{sub4.2}
We show that as long as the objective state distribution $q$ satisfies a generic condition, every $f$ is robustly implementable even when agents' costs can be unbounded in perturbed environments.
\begin{Definition}
The objective state distribution $q \in \Delta (\Theta)$ is generic if there exists $\theta^* \in \Theta$ such that $q(\theta^*) > q(\theta')$ for every $\theta' \neq \theta^*$.
\end{Definition}
For example, when there are two states, our generic condition rules out $q$ that assigns probability $\frac{1}{2}$ to each state but allows for any other full support distribution.
\begin{Theorem}\label{Theorem2}
Suppose $q$ is generic. For every social choice function $f:\Theta \rightarrow \Delta (Y)$, there exists a finite mechanism $\mathcal{M}$ that robustly implements $f$.
\end{Theorem}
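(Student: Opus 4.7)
The plan is to mimic the three-step structure used to prove Theorem~\ref{Theorem1}, but with the Augmented Status Quo Rule of Section~\ref{sec2} and a restricted strategy set that accommodates the new negative messages. Let $\theta^{1}$ denote the unique state with $q(\theta^{1})>q(\theta^{j})$ for every $j\neq 1$. Take $M_{1}=M_{2}=\{1,2,-2,\ldots,n,-n\}$, set $s(1)=1$ and $s(j)=s(-j)=j$ for $j\geq 2$, and let $g(m_{1},m_{2})=f(\theta^{s(m_{1})})$ when $s(m_{1})=s(m_{2})$ and $g(m_{1},m_{2})=f(\theta^{1})$ otherwise. The transfer equals $R^{j}$ when both send $j\in\{1,\ldots,n\}$, equals $R^{0}$ when both send $-j$ for some $j\geq 2$ or when one sends $1$ and the other sends $-j$, and equals $0$ otherwise. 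The numbers $0<R^{0}<R^{1}<\cdots<R^{n}$ are chosen to satisfy
\[
\frac{R^{j}}{R^{0}}<\frac{q(\theta^{1})}{q(\theta^{j})}\quad\text{for every }j\geq 2,
\]
which is feasible precisely because of genericity of $q$, with consecutive differences bounded below by an appropriate function of $c$ and with $R^{0}$ taken large relative to $c$.

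Define the restricted strategy set $\Sigma^{*}$ as the set of pure strategies under which, at every state $\theta^{j}$, the message lies in $\{1,j\}\cup\{-2,\ldots,-n\}$, so that only the ``wrong-positive'' messages $\{2,\ldots,n\}\setminus\{j\}$ are excluded. Step~1 would verify that truthfulness is $\gamma$-dominant for some $\gamma<1/2$ in the unperturbed restricted game. The binding alternatives are the constant strategies: truthfulness versus the constant message $1$ is handled as in Theorem~\ref{Theorem1} and is driven by the gap $R^{j}-R^{1}$; truthfulness versus each constant negative strategy $C_{-k}$ is new and yields a threshold strictly below $1/2$ once $R^{0}$ is large relative to $c$, using $R^{j}>R^{0}$. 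Step~2 is the critical path lemma of Kajii and Morris (1997), which produces, for every $\varepsilon>0$, an equilibrium $\sigma(\mathcal{G})$ of the restricted perturbed game in which both agents are truthful with probability at least $1-\varepsilon$.

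Step~3 shows that $\sigma(\mathcal{G})$ remains an equilibrium once every strategy is available. For any type $Q_{i}(\omega)$ and any deviation $\tau'\notin\Sigma^{*}$, I would exhibit a dominating strategy $\tau\in\Sigma^{*}$ by distinguishing two cases. If $\tau'$ is non-constant (hence pays learning cost regardless), I sanitize it state by state by replacing each wrong-positive message $k$ with $-k$; because the opponent plays in $\Sigma^{*}$ and therefore never sends $k$ at a state $\theta^{j}$ with $j\neq k$, the pairs $(k,\cdot)$ and $(-k,\cdot)$ induce identical outcome distributions, while $(-k,\cdot)$ yields weakly (and generically strictly) larger transfers at the same learning cost. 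If $\tau'=C_{k}$ is a constant wrong-positive strategy for some $k\geq 2$, I take $\tau=C_{-k}$: again the outcome distributions coincide against any opponent strategy in $\Sigma^{*}$, and the expected-transfer difference $C_{-k}-C_{k}$ equals $q(\theta^{1})R^{0}-q(\theta^{k})R^{k}$ up to an $O(\varepsilon)$ correction when the opponent is truthful with probability at least $1-\varepsilon$, which is strictly positive by the ratio bound for $\varepsilon$ sufficiently small. Since biased types' equilibrium actions in $\sigma(\mathcal{G})$ are already best responses within $\Sigma^{*}$, they have no strict incentive to deviate to $\tau'$, so $\sigma(\mathcal{G})$ is an equilibrium of the unrestricted perturbed game; inequality~(\ref{2.5}) then follows from both agents being truthful with probability at least $1-\varepsilon$.

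The main obstacle is calibrating the rewards $\{R^{j}\}$ to reconcile two genuinely competing forces. Step~1 demands $R^{j}$ sufficiently above $R^{0}$ so that normal types strictly prefer truthfulness to the constant negative strategies, while Step~3 requires the ratio bound $R^{j}/R^{0}<q(\theta^{1})/q(\theta^{j})$ so that high-cost biased types strictly prefer the constant negative message to the constant wrong-positive message, and therefore cannot trigger contagion by sending a wrong positive. Genericity of $q$ is exactly the condition that renders the interval of feasible ratios $\bigl(1,q(\theta^{1})/q(\theta^{j})\bigr)$ non-empty; choosing $R^{0}$ large and the ratios $R^{j}/R^{0}$ within this interval reconciles the two demands and completes the construction.
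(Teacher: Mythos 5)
Your construction and three-step architecture match the paper's proof: the augmented status quo rule with $2n-1$ messages, a restricted strategy set permitting only negative messages, the status quo message, and the truthful message, the Kajii--Morris critical path lemma in Step 2, and the message-by-message sanitization in Step 3; your closing discussion also correctly identifies why genericity of $q$ reconciles the two constraints on the rewards. However, Step 3 as written has two genuine problems, both concentrated in the comparison of the constant strategies $C_k$ and $C_{-k}$.

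First, your transfer rule pays $R^0$ only when both agents send the \emph{same} negative message or when one sends $1$ and the other sends a negative message, and pays $0$ when the two agents send \emph{different} negative messages; the paper instead pays $R^0$ whenever $m_1,m_2\le 1$ and $(m_1,m_2)\neq(1,1)$. This matters because the dominance of $C_{-k}$ over $C_k$ must hold against \emph{every} opponent strategy in $\Delta(\Sigma^*)$, and $\Sigma^*$ allows the opponent to send $-j$ with $j\neq k$ in state $\theta^1$. Against an opponent who sends $-j$ in state $\theta^1$ and is truthful elsewhere, your $C_{-k}$ earns $0$ while $C_k$ earns $q(\theta^k)R^k>0$, so the replacement fails and $\sigma(\mathcal{G})$ need not survive in the unrestricted game. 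Second, and independently, you evaluate the difference ``up to an $O(\varepsilon)$ correction when the opponent is truthful with probability at least $1-\varepsilon$.'' That $1-\varepsilon$ is an \emph{ex ante} property of $\sigma(\mathcal{G})$; an individual type of agent $1$ --- precisely the biased, high-cost types that Step 3 must discipline --- may assign high interim probability to non-truthful opponent types, so near-truthfulness is not a premise available at the interim stage. The argument must be belief-free within $\Delta(\Sigma^*)$: with the paper's transfer rule, every opponent strategy in $\Delta(\Sigma^*)$ satisfies $\Pr(m_2=k)\le q(\theta^k)$ and $\Pr(m_2\le 1)\ge q(\theta^1)$, so the expected transfer from $C_{-k}$ is at least $q(\theta^1)R^0$, which strictly exceeds the bound $q(\theta^k)R^k$ on the expected transfer from $C_k$, exactly and for every interim belief. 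Both gaps close once you adopt the paper's transfer rule and rerun the constant-strategy comparison using only membership of the opponent's strategy in $\Delta(\Sigma^*)$.
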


\paragraph{Augmented Status Quo Rule:} Recall that $\Theta \equiv \{\theta^1,...,\theta^n\}$.
When $q$ is generic, there exists a ranking over states such that $q(\theta^1)> q(\theta^2) \geq q(\theta^3) \geq ... \geq q(\theta^n)>0$.

Each agent has $2n-1$ messages:
$M_1=M_2= \{-n,...,-2\} \cup \{1\} \cup \{2,...,n\}$.
The outcome function is
\begin{equation}\label{3.1}
g(m_1,m_2) = \left\{ \begin{array}{ll}
f(\theta^{|m_1|}) & \textrm{ if } |m_1|=|m_2| \\
f(\theta^1) & \textrm{ otherwise. }
\end{array} \right.
\end{equation}
The transfer function for agent $i \in \{1,2\}$ is
\begin{equation}\label{3.2}
t_i(m_i,m_{-i}) = \left\{ \begin{array}{ll}
R^j & \textrm{ if } m_1=m_2=j \geq 1 \\
R^0 & \textrm{ if } m_1,m_2 \leq 1 \textrm{ but } (m_1,m_2) \neq (1,1)\\
0 & \textrm{ otherwise, }
\end{array} \right.
\end{equation}
where $R^n>R^{n-1}>...>R^2>R^1>R^0>0$ satisfy
\begin{equation}\label{3.3}
R^1>R^0+\frac{2c}{q(\theta^1)} \quad \textrm{and} \quad   R^j \geq R^1+\frac{2c}{q(\theta^j)} \textrm{ for every } j  \geq 2
\end{equation}
and
\begin{equation}\label{3.4}
\frac{R^0}{R^n} > \frac{q(\theta^2)}{q(\theta^1)}.
\end{equation}
When $q$ is generic, there exist $R^n,...,R^1,R^0$ that satisfy both (\ref{3.3}) and (\ref{3.4}). The case with two states has been shown in the example. When there are three states, our mechanism is given by:
\begin{center}
\begin{tabular}{| c | c | c | c | c | c |}
\hline
  $g$ & $-3$ & $-2$ & 1 & 2 & 3 \\
  \hline
  $-3$ & ${\color{blue}{f(\theta^3)}}$ & $f(\theta^1)$ & $f(\theta^1)$ & $f(\theta^1)$ & ${\color{blue}{f(\theta^3)}}$\\
  \hline
  $-2$ & $f(\theta^1)$ & ${\color{blue}{f(\theta^2)}}$ & $f(\theta^1)$ & ${\color{blue}{f(\theta^2)}}$ & $f(\theta^1)$\\
  \hline
  1 & $f(\theta^1)$ & $f(\theta^1)$ & $f(\theta^1)$ & $f(\theta^1)$   & $f(\theta^1)$\\
  \hline
  2 & $f(\theta^1)$ & ${\color{blue}{f(\theta^2)}}$ & $f(\theta^1)$ & ${\color{blue}{f(\theta^2)}}$ & $f(\theta^1)$\\
  \hline
  3 & ${\color{blue}{f(\theta^3)}}$ & $f(\theta^1)$ & $f(\theta^1)$ & $f(\theta^1)$ & ${\color{blue}{f(\theta^3)}}$\\
  \hline
\end{tabular}
\end{center}
\begin{center}
\begin{tabular}{| c | c | c | c | c | c |}
\hline
  $t_1,t_2$ & $-3$ & $-2$ & 1 & 2 & 3 \\
  \hline
  $-3$ & $R^0, R^0$ & $R^0, R^0$ & $R^0, R^0$ & $0,0$ & $0,0$\\
  \hline
  $-2$ & $R^0, R^0$ & $R^0, R^0$ & $R^0, R^0$ & $0,0$ & $0,0$\\
  \hline
  1 & $R^0, R^0$ & $R^0, R^0$ & ${\color{blue}{R^1,R^1}}$ & $0,0$   & $0,0$\\
  \hline
  2 & $0,0$ & $0,0$ & $0,0$ & ${\color{blue}{R^2,R^2}}$ & $0,0$\\
  \hline
  3 & $0,0$ & $0,0$ & $0,0$ & $0,0$ & ${\color{blue}{R^3,R^3}}$\\
  \hline
\end{tabular}
\end{center}

An agent's pure strategy in the unperturbed environment is $(m^1,...,m^n)$ where $m^j \in M$ represents the message he sends when the state is $\theta^j$. Let $\Sigma \equiv \{-n,...,-2,1,2,...,n\}^n$ be the set of pure strategies. An agent is \textit{truthful} if he uses strategy $(1,2,...,n)$, that is, he reports the index of the realized state. Our proof proceeds in three steps, which is similar to that of Theorem \ref{Theorem1}.

\paragraph{Step 1: Restricted Game without Perturbation} We start from examining a game \textit{without} any perturbation where agents are only allowed to use strategies in $\Delta (\Sigma^*)$, where
\begin{equation}\label{3.5}
\Sigma^* \equiv \Big\{ (m^1,...,m^n) \in \Sigma \textrm{ such that } m^j \in \{-n,...,-2,1\}\cup\{j\} \textrm{ for every } j \geq 1 \Big\} .
\end{equation}
In words, each agent is only allowed to send negative messages, the status quo message $1$, or a message that coincides with the index of the realized state. For example, when $n=2$, $\Sigma^* = \{(-2,-2),(-2,1),(-2,2),(1,-2), (1,1),(1,2)\}$ while $\Sigma =\Sigma^* \bigcup \{(2,-2),(2,1),(2,2)\}$.

We show that there exists $\gamma < \frac{1}{2}$ such that both agents being truthful is a $\gamma$-dominant equilibrium in the
restricted game without any perturbation. Suppose agent $1$ believes that agent $2$ plays $(1,2,...,n)$ with probability at least $\frac{1}{2}$ and that agent 2's strategy belongs to $\Delta (\Sigma^*)$.
\begin{itemize}
  \item Conditional on $\theta=\theta^j$ for every $j \in \{2,3,...,n\}$. If agent 1 sends message $j$,
  his expected transfer equals $\Pr(m_2=j|\theta^j) R^j$, which is at least $\frac{R^j}{2}$ given that agent 2 is truthful with probability at least $\frac{1}{2}$. If agent $1$ sends any $m_1 \leq 1$, his expected transfer  is no more than $\Pr(m_2 \neq j |\theta^j) R^1$, which is at most $\frac{R^1}{2}$. Since $R^j>R^1+\frac{2c}{q(\theta^j)}$, agent 1 strictly prefers message $j$ to any $m_1 \leq 1$ in state $\theta^j$ even taking into account the cost $c$ of observing $\theta$.
  \item Conditional on $\theta=\theta^1$. If agent 1 sends message $1$, his expected transfer is $\Pr(m_2=1|\theta^1) R^1 + \Pr(m_2 \leq -2 |\theta^1) R^0$, which is at least $\frac{R^1+R^0}{2}$ given that $\Pr(m_2=1|\theta^1) \geq \frac{1}{2}$. If agent 1 sends any negative message, he receives transfer $R^0$. Since $R^1>R^0+\frac{2c}{q(\theta^1)}$, agent 1 strictly prefers $1$ to any negative message in state $\theta^1$ even taking into account the cost of observing $\theta$.
\end{itemize}
Since agent 1's incentives are strict when he believes that agent 2 is truthful with probability at least $\frac{1}{2}$, there exists $\gamma < \frac{1}{2}$ such that agent 1 strictly prefers $(1,2,...,n)$ to other strategies in $\Sigma^*$ when agent $2$'s strategy belongs to $\Delta (\Sigma^*)$ and is truthful with probability at least $\gamma$.

\paragraph{Step 2: Restricted Game with Perturbation} For any perturbation $\mathcal{G}$, consider a \textit{restricted perturbed game} where type $Q_i(\omega)$ of agent $i$'s payoff is $\widetilde{u}_i(\omega,\theta,y)+t_i-\widetilde{c}_i(\omega) \chi_i$ and all types of both agents are only allowed to use strategies in $\Delta(\Sigma^*)$.

Since both agents being truthful is a $\gamma$-dominant equilibrium in the unperturbed restricted game for some $\gamma$ less than $\frac{1}{2}$,
the critical path lemma in Kajii and Morris (1997) implies that for every $\varepsilon>0$, there exists $\eta>0$, such that for every $\eta$-perturbation $\mathcal{G}$,
there exists an equilibrium
$\sigma(\mathcal{G})$ in the perturbed game where all types of both agents are only allowed to use strategies in $\Delta (\Sigma^*)$ such that
the probability with which both agents are truthful under $\sigma(\mathcal{G})$ is at least $1-\varepsilon$.

\paragraph{Step 3: Unrestricted Game with Perturbation} We show that when $q$ is generic and $\{R^n,...,R^1,R^0\}$ satisfy (\ref{3.3}) and (\ref{3.4}), for every perturbation $\mathcal{G}$, every equilibrium $\sigma(\mathcal{G})$ in the restricted perturbed game remains an equilibrium in the unrestricted perturbed game when both agents can use any strategy in $\Delta(\Sigma)$. For this purpose, we only need to show that for every pure strategy that does not belong to $\Sigma^*$, there exists a pure strategy that belongs to $\Sigma^*$ such that every type of agent $1$ weakly prefers
the latter to the former when he believes that
agent 2 plays according to $\sigma(\mathcal{G})$. We consider two cases separately.

For every $(m^1,...,m^n) \notin \Sigma^*$ that is non-constant, let $(m^1_*,...,m^n_*)$ be such that
  \begin{equation}
m_{*}^j \equiv \left\{ \begin{array}{ll}
m^j & \textrm{ if } m^j \in \{-n,...,-2\} \cup \{1,j\} \\
-m^j & \textrm{ if } m^j \notin \{-n,...,-2\} \cup \{1,j\}
\end{array} \right. \textrm{ for every } j \in \{1,2,...,n\}.
\end{equation}
By construction, $(m^1_*,...,m^n_*) \in \Sigma^*$, and moreover, it does not increase the cost of obtaining information compared to $(m^1,...,m^n)$. Our construction of $g (m_1,m_2)$ ensures that $(m^1_*,...,m^n_*)$ and $(m^1,...,m^n)$ induce the same distribution of $(\theta,y)$ regardless of agent 2's message. Regardless of agent 1's type, as long as he believes that agent 2 behaves according to $\sigma(\mathcal{G})$, which implies that agent 2's strategy belongs to $\Delta (\Sigma^*)$, agent 1 receives weakly greater transfer from $(m^1_*,...,m^n_*)$ compared to $(m^1,...,m^n)$ since sending $m^j \notin \{-n,...,-2\} \cup \{1,j\}$ leads to a transfer of $0$ in state $\theta^j$ when agent 2's message belongs to $\{-n,...,-2\} \cup \{1,j\}$.

For every $(m^1,...,m^n) \notin \Sigma^*$ that is constant, there exists $k \in \{2,3,...,n\}$ such that $(m^1,...,m^n)=(k,...,k)$. Compare any given type of agent 1's payoff from using strategy $(k,...,k)$ and from using strategy $(-k,...,-k)$. Our construction of $g(m_1,m_2)$ implies that $(k,...,k)$ and  $(-k,...,-k)$
lead to the same distribution over $(\theta,y)$ and neither strategy requires the agent to learn $\theta$. The expected transfer agent 1 receives is $\Pr(m_2=k)R^k$ if he uses strategy $(k,...,k)$ and is $\Pr(m_2 \leq 1) R^0$ if he uses strategy $(-k,...,-k)$. When every type of agent 2's strategy belongs to $\Delta (\Sigma^*)$, $\Pr(m_2 \leq 1) \geq q(\theta^1)$ and $\Pr(m_2=k) \leq q(\theta^k)$. We have $\Pr(m_2=k)R^k < \Pr(m_2 \leq 1) R^0$ given (\ref{3.4}).

\subsection{Robustness to Trembles and Noisy Information}\label{sub4.3}
We extend our robust implementation result to situations where agents may tremble with small probability when sending messages and agents' signals about the state can be noisy (so that the two agents' private signals may not be perfectly correlated).
Our new mechanism has the same outcome function as the augmented status quo rule but with a different transfer function.

\paragraph{Trembles:} For any mechanism $\mathcal{M}$, suppose for every $i \in \{1,2\}$,
  when agent $i$ intends to send message $m_i \in M_i$, the principal receives $m_i$ with probability $1-\tau$ and receives a message that is drawn according to $F_i^{M_i} \in \Delta (M_i)$ with probability $\tau$.

  Throughout this section, we distinguish between an agent's \textit{intended message} and his \textit{realized message}. We write $F_i$ instead of $F_i^{M_i}$ in order to simplify notation.

\paragraph{Noisy Information:} Suppose $q \in \Delta (\Theta)$ is generic. Let $\Theta \equiv \{\theta^1,...,\theta^n\}$  such that $q(\theta^1)> q(\theta^2) \geq ... \geq q(\theta^n)>0$. For every $i \in \{1,2\}$, let $S_i \equiv \{s_i^1,...,s_i^{|S_i|}\}$ be agent $i$'s signal space, with $|S_i| \geq n$. Note that $|S_i|$ can be any finite number, i.e., there is no known upper bound on the number of signal realizations.
Let $\pi \in \Delta (\Theta \times S_1 \times S_2)$ be the joint distribution of the state and agents' private signals. We say that $\pi$ is of size $\tau>0$ if
\begin{itemize}
  \item[(a)] The marginal distribution of $\pi$ on $\Theta$ is $q \in \Delta (\Theta)$.
  \item[(b)] For every $i \in \{1,2\}$, there exists a mapping $h_i : S_i \rightarrow \{1,2,...,n\}$  such that
  \begin{equation}\label{A.1}
    \pi \Big( h_{-i}(s_{-i})=h_i(s_i) \Big| s_i \Big) \geq 1-\tau \textrm{ for every } s_i \in S_i,
  \end{equation}
and
  \begin{equation}\label{A.2}
  \sum_{j=1}^n \sum_{s_i \in \{h_i(s_i)=j\}} \pi(\theta^j,s_i) \geq 1-\tau.
  \end{equation}
\end{itemize}
Our first requirement says that the marginal distribution on $\theta$ equals $q$. Our second requirement is reminiscent of Chung and Ely (2003) and Sugaya and Takahashi (2013),
in which every signal realization is linked to a particular state, given by the mapping $h_i$.
One can think about $h_i$ as endowing each signal realization with a \textit{meaning}.
The mappings from signal realizations to their meanings satisfy, first,
every agent believes that the other agent receives a signal with the same meaning with probability close to $1$, and second, the meaning of each agent's signal realization coincides with the state with ex ante probability close to $1$.

The planner does not know the perturbation $\mathcal{G}$ as well as $\{\tau, F_1, F_2,\pi\}$. She would like to design a mechanism that can approximately implement the desired social choice function for all small enough perturbations, small enough trembles, and small enough noise in agents' signals about the state. Agents know the mechanism $\mathcal{M}$, the perturbation $\mathcal{G}$, their respective information about $\omega$ under $\mathcal{G}$, as well as $\{\tau, F_1, F_2,\pi\}$ before deciding whether to learn $\theta$ and which messages they intend to send. The planner implements an outcome and makes transfers according to the \textit{realized messages} and the mechanism she committed to.
\begin{Theorem}\label{Theorem3}
Suppose $q$ is generic. For every $f: \Theta \rightarrow \Delta (Y)$, there exists $\mathcal{M}$ such that for every $\varepsilon>0$, there exist $\eta>0$ and $\overline{\tau}>0$ such that for every $\tau< \overline{\tau}$, $F_1,F_2$, every $\pi$ that is of size $\overline{\tau}$, and every $\eta$-perturbation $\mathcal{G}$, there exists an equilibrium $\sigma(\mathcal{G})$ such that $\max_{\theta \in \Theta} ||    g_{\sigma(\mathcal{G})}(\theta)-f(\theta) ||_{\textrm{TV}} < \varepsilon$.
\end{Theorem}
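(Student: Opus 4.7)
The plan is to mimic the three-step architecture of the proof of Theorem~\ref{Theorem2}, keeping the outcome function $g$ of the augmented status quo rule intact and modifying the transfer function only by strengthening the inequalities in (\ref{3.3}) and (\ref{3.4}) to strict inequalities with a positive margin. The margin will be chosen first, then $\overline{\tau}$ and $\eta$ will be chosen small enough that this slack absorbs the $O(\tau)$ losses coming from message trembles and signal disagreement. The natural notion of truthfulness in this setting is that agent $i$ intends to send $h_i(s_i)$ upon observing signal $s_i$; the restricted pure-strategy set $\Sigma^*$ is defined analogously, requiring that on $\{s_i : h_i(s_i)=j\}$ an agent send a message in $\{-n,\dots,-2,1,j\}$.

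Step~1 shows that truthful play is a $\gamma$-dominant equilibrium for some $\gamma<\tfrac12$ in the restricted, unperturbed game with signals and trembles. Conditional on $h_i(s_i)=j\ge 2$, the probability that agent $-i$'s \emph{realized} message equals $j$ under truthful opposing play is at least $(1-\tau)^2\bigl(1-\tau\bigr)$, which exceeds $\tfrac12$ for small enough $\tau$ and $\overline{\tau}$; the strengthened bound on $R^j-R^1$ then makes the truthful message strictly preferred to any $m_i\le 1$ even after paying the cost $c$, uniformly in $\tau<\overline{\tau}$, in $F_1,F_2$, and in all $\pi$ of size $\overline{\tau}$. Conditional on $h_i(s_i)=1$, the strengthened bound on $R^1-R^0$ plays the analogous role, dominating reports in $\{-n,\dots,-2\}$.

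Step~2 folds signals and trembles — which are independent of the perturbation $\mathcal{G}$ — into the payoffs of this $\gamma$-dominant baseline, and invokes the critical path lemma of Kajii and Morris (1997) in the restricted perturbed game. This yields, for every $\varepsilon>0$ and some $\eta>0$, an equilibrium $\sigma(\mathcal{G})$ in which both agents play the truthful signal-to-message strategy with ex ante probability at least $1-\varepsilon$. Combined with (\ref{A.2}), the induced outcome satisfies $\max_\theta\|g_{\sigma(\mathcal{G})}(\theta)-f(\theta)\|_{\text{TV}}<\varepsilon$ after adjusting $\overline{\tau}$ downward.

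Step~3 extends $\sigma(\mathcal{G})$ to the unrestricted game via the two replacement arguments of Step~3 of Theorem~\ref{Theorem2}, applied pointwise in the agent's signal realization. A non-constant strategy that sends some $m\notin\{-n,\dots,-2\}\cup\{1,h_i(s_i)\}$ at some $s_i$ is dominated by its $\Sigma^*$-analog by the outcome-preservation property of $g$ together with the weak transfer gain; a constant strategy $s_i\mapsto k$ with $k\ge 2$ is dominated by the constant $s_i\mapsto -k$, because the strict inequality $R^0/R^n>q(\theta^2)/q(\theta^1)$ has enough slack to absorb the $O(\overline{\tau})$ discrepancy between the marginal distribution of $h_{-i}(s_{-i})$ under truthful play and $q$. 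The main obstacle is Step~1: the quantifier order matters crucially, and one must verify that the $\gamma$-dominance bound is uniform in $\tau<\overline{\tau}$ and in all size-$\overline{\tau}$ signal distributions $\pi$, which is precisely what conditions (\ref{A.1}) and (\ref{A.2}) provide once the transfer slack is fixed before choosing $\overline{\tau}$.
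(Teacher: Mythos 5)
There is a genuine gap, and it is located exactly where you flag the ``main obstacle'' in the wrong place: the step that fails is not Step~1 but Step~3, and no strengthening of (\ref{3.3}) and (\ref{3.4}) can repair it. Your plan keeps the augmented status quo rule's transfer function, under which an agent who sends a positive message $m\geq 2$ receives $0$ whenever the other agent's message is not $m$, while a negative message earns $R^0$ only when the other agent's message is $\leq 1$. Without trembles this gives the ``weak transfer gain'' you invoke, because a wrong positive message earns exactly $0$ against any strategy in $\Sigma^*$. With trembles it does not: take $n\geq 3$, let agent $1$ observe $s_1$ with $h_1(s_1)=3$, and suppose agent $2$ intends to be truthful and $F_2$ puts all its mass on message $2$. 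Then intending to send $2$ yields expected transfer $\tau F_2(2)R^2=\tau R^2>0$, while intending to send $-2$ yields $\Pr(m_2\leq 1\mid s_1)R^0=\tau F_2(\{m\leq 1\})R^0=0$. The replacement of $m^k$ by $-m^k$ therefore strictly \emph{loses} transfer, so the restricted-game equilibrium $\sigma(\mathcal{G})$ need not survive in the unrestricted game, and the critical-path construction collapses. The comparison here is conditional on a signal whose meaning is $j\geq 2$, where $\Pr(m_2\leq 1\mid s_1)$ is $O(\tau)$ rather than bounded below by $q(\theta^1)$, so the slack you build into $R^0/R^n>q(\theta^2)/q(\theta^1)$ is irrelevant: that inequality only controls the \emph{unconditional} comparison used for constant strategies. (For $n=2$ there is no wrong positive message, which is why the augmented rule does survive trembles in the two-state case.)

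The paper's proof fixes this by changing the transfer function, not by tightening the inequalities: in the Modified Status Quo Rule any message $\leq 1$ guarantees $R^0$ \emph{unconditionally}, and sending a message $\geq 2$ against a message $\leq 1$ incurs a penalty $R^0-x$. Then a wrong positive message $m$ earns $\Pr(m_2=m\mid s_1)R^m+\Pr(m_2\leq 1\mid s_1)(R^0-x)$, which is below the guaranteed $R^0$ once $\Pr(m_2=m\mid s_1)\leq 2\overline{\tau}$ is small, restoring the Step~3 domination uniformly in $F_1,F_2$; the new conditions (\ref{3.7})--(\ref{3.8}) then replace (\ref{3.3})--(\ref{3.4}) to preserve Steps~1 and~2. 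Your Steps~1 and~2 are otherwise sound in outline (modulo the loose $(1-\tau)^2(1-\tau)$ bound, which should be $\tfrac12(1-\tau)(1-\overline{\tau})$), but without modifying the transfers the argument does not go through for $|\Theta|\geq 3$.
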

Similar to our proof of Theorem \ref{Theorem2}, we consider a mechanism where each agent has $2n-1$ messages $M \equiv \{-n,...,-2\} \cup \{1\} \cup \{2,3,...,n\}$.
Agent $i$'s pure strategy is an $|S_i|$-dimensional vector $(m^1,...,m^{|S_i|})$ where $m^k \in M$ represents
agent $i$'s \textit{intended message}
when his private signal is $s_i=s_i^k$. That being said, conditional on
$s_i=s_i^k$,
 agent $i$'s \textit{realized message}  is $m^k$ with probability $1-\tau$ and is randomly drawn according to $F_i \in \Delta (M_i)$ with probability $\tau$.
Let
\begin{equation*}
\Sigma_i^* \equiv \Big\{ (m^1,...,m^{|S_i|}) \in \Sigma \textrm{ such that for  every $k$, } m^k \in \{-n,...,-2,1\}\cup\{j\} \textrm{ when } h_i(m^k)=j \Big\}.
\end{equation*}
Intuitively, $\Sigma^*$ is the subset of agent $i$'s pure strategies such that conditional on every realization of his private signal, he either sends a negative message, or the status quo message $1$, or the true meaning of his private signal.
Agent $i$ \textit{intends to be truthful} if he sends the true meaning of his private signal for every $s_i \in S_i$.

When there are two states, our augmented status quo rule can robustly implement $f$ when agents' tremble and their signals about the state are noisy.
When there are three or more states, our augmented status quo rule cannot robustly implement $f$ as can be illustrated by the following example.
Suppose there are three states $\{\theta^1,\theta^2,\theta^3\}$ and three private signals $\{s_i^1,s_i^2,s_i^3\}$ for each agent $i \in \{1,2\}$, where $h_i(s_i^j)=j$ for every $i \in \{1,2\}$ and $j \in \{1,2,3\}$. For simplicity, let us also assume that each agent's private signal perfectly reveals the state. Suppose agent $1$ observes $s_1=s_1^3$ and he believes that every type of agent 2's strategy belongs to $\Delta (\Sigma_2^*)$, let us compare agent 1's expected transfer when he intends to send message $-2$ and when he intends to send message $2$.
When agent 1's realized message is $2$, his expected transfer under our augmented status quo rule is $\Pr(m_2=2|\theta=\theta^3) R^{2}$.
When agent 1's realized message is $-2$, his expected transfer under our augmented status quo rule is $\Pr(m_2 \leq 1|\theta=\theta^3) R^{0}$.
If the trembling probability $\tau$ is $0$, then $\Pr(m_2 \leq 1|\theta=\theta^3) R^{0} \geq \Pr(m_2=2|\theta=\theta^3) R^{2}$ when agent 2's strategy belongs to $\Delta(\Sigma^*)$.
If $\tau>0$ and agent 2 intends to send message $3$ with probability $1$ in state $\theta^3$, then $\Pr(m_2=2|\theta=\theta^3) R^{2}=\tau F_2(2) R^{2}$ can be strictly greater than
$\Pr(m_2 \leq 1| \theta=\theta^3) R^0=\tau F_2(m_2 \leq 1) R^0$.

We present a new mechanism called the \textit{Modified Status Quo Rule} which solves this problem. Each agent has $2n-1$ messages $M \equiv \{-n,...,-2\} \cup \{1\} \cup \{2,...,n\}$. The outcome function is
\begin{displaymath}
g(m_1,m_2) = \left\{ \begin{array}{ll}
f(\theta^{|m_1|}) & \textrm{ if } |m_1|=|m_2| \\
f(\theta^1) & \textrm{ otherwise }
\end{array} \right.
\end{displaymath}
The transfer functions are
\begin{displaymath}
t_1(m_1,m_2) = \left\{ \begin{array}{ll}
R^j & \textrm{ if } m_1=m_2=j \geq 1 \\
R^0 & \textrm{ if } m_1 \leq 1 \textrm{ but } (m_1,m_2) \neq (1,1)\\
R^0-x & \textrm{ if } m_1 \geq 2 \textrm{ and } m_2 \leq 1 \\
0 & \textrm{ otherwise }
\end{array} \right.
\end{displaymath}
\begin{displaymath}
t_2(m_1,m_2) = \left\{ \begin{array}{ll}
R^j & \textrm{ if } m_1=m_2=j \geq 1 \\
R^0 & \textrm{ if } m_2 \leq 1 \textrm{ but } (m_1,m_2) \neq (1,1)\\
R^0-x & \textrm{ if } m_2 \geq 2 \textrm{ and } m_1 \leq 1 \\
0 & \textrm{ otherwise }
\end{array} \right.
\end{displaymath}
where $R^n,...,R^0>x>\frac{c}{q(\theta^n)}$ satisfy
\begin{equation}\label{3.7}
  R^1-R^0 >  \frac{4c}{q(\theta^1)}, \quad  R^{j}-R^{1}-x > \frac{2c}{q(\theta^j)} \textrm{ for every } j \in \{2,3,...,n\},
\end{equation}
and
\begin{equation}\label{3.8}
   \frac{x}{R^j-R^0} > \frac{q(\theta^j)}{q(\theta^1)} \textrm{ for every } j \in \{2,3,...,n\}.
\end{equation}
When there are three states, the modified status quo rule is given by:
\begin{center}
\begin{tabular}{| c | c | c | c | c | c |}
\hline
  $g$ & $-3$ & $-2$ & 1 & 2 & 3 \\
  \hline
  $-3$ & ${\color{blue}{f(\theta^3)}}$ & $f(\theta^1)$ & $f(\theta^1)$ & $f(\theta^1)$ & ${\color{blue}{f(\theta^3)}}$\\
  \hline
  $-2$ & $f(\theta^1)$ & ${\color{blue}{f(\theta^2)}}$ & $f(\theta^1)$ & ${\color{blue}{f(\theta^2)}}$ & $f(\theta^1)$\\
  \hline
  1 & $f(\theta^1)$ & $f(\theta^1)$ & $f(\theta^1)$ & $f(\theta^1)$   & $f(\theta^1)$\\
  \hline
  2 & $f(\theta^1)$ & ${\color{blue}{f(\theta^2)}}$ & $f(\theta^1)$ & ${\color{blue}{f(\theta^2)}}$ & $f(\theta^1)$\\
  \hline
  3 & ${\color{blue}{f(\theta^3)}}$ & $f(\theta^1)$ & $f(\theta^1)$ & $f(\theta^1)$ & ${\color{blue}{f(\theta^3)}}$\\
  \hline
\end{tabular}
\end{center}
\begin{center}
\begin{tabular}{| c | c | c | c | c | c |}
\hline
  $t_1,t_2$ & $-3$ & $-2$ & 1 & 2 & 3 \\
  \hline
  $-3$ & $R^0, R^0$ & $R^0, R^0$ & $R^0, R^0$ & $R^0,R^0{\color{red}{-x}}$ & $R^0,R^0{\color{red}{-x}}$\\
  \hline
  $-2$ & $R^0, R^0$ & $R^0, R^0$ & $R^0, R^0$ & $R^0,R^0{\color{red}{-x}}$ & $R^0,R^0{\color{red}{-x}}$\\
  \hline
  1 & $R^0, R^0$ & $R^0, R^0$ & ${\color{blue}{R^1,R^1}}$ & $R^0,R^0{\color{red}{-x}}$   & $R^0,R^0{\color{red}{-x}}$\\
  \hline
  2 & $R^0{\color{red}{-x}},R^0$ & $R^0{\color{red}{-x}},R^0$ & $R^0{\color{red}{-x}},R^0$ & ${\color{blue}{R^2,R^2}}$ & $0,0$\\
  \hline
  3 & $R^0{\color{red}{-x}},R^0$ & $R^0{\color{red}{-x}},R^0$ & $R^0{\color{red}{-x}},R^0$ & $0,0$ & ${\color{blue}{R^3,R^3}}$\\
  \hline
\end{tabular}
\end{center}

Intuitively, the outcomes under the augmented status quo rule and the modified status quo rule are the same. The only difference is in the transfer function: By sending the status quo message or any negative message, an agent is guaranteed to receive transfer $R^0$. When an agent sends a message at least $2$, he faces a penalty $x$ if the other agent sends the status quo message or a negative message, and receives zero transfer when the other agent sends a different message of at least $2$.

The proof follows similar steps as before. First, there exists $\gamma < \frac{1}{2}$ such that both agents intending to be truthful is a $\gamma$-dominant equilibrium in the restricted unperturbed game where agents are only allowed to use strategies in $\Delta(\Sigma_1)$ and $\Delta (\Sigma_2)$. To see this, suppose agent 2 intends to be truthful with probability at least $\frac{1}{2}$.
\begin{itemize}
  \item For every $j \geq 2$,
conditional on every $s_1 \in S_1$ with $h(s_1)=j$, if agent $1$'s realized message is $j$, then he receives an expected transfer of
\begin{equation*}
    \Pr(m_2=j|s_1) R^j +\Pr(m_2 \leq 1 |s_1) R^0,
\end{equation*}
and if agent 1's realized message is no more than 1, then he receives an expected transfer of
\begin{equation*}
    \Pr(m_2 =1 |s_1) R^1 + \Pr(m_2 \neq 1 | s_1) R^0.
\end{equation*}
Since $\pi(h_{2}(s_{2})=h_1(s_1)|s_1) \geq 1-\overline{\tau}$ when $\pi$ is of size $\overline{\tau}$, we have
$\Pr(m_2=j|s_1) \geq \frac{1-\tau}{2} (1-\overline{\tau})$ and $\Pr(m_2=1|s_1) \leq 1- \frac{1-\tau}{2} (1-\overline{\tau})$.
When $R^{j}-R^{1}-x > \frac{2c}{q(\theta^j)}$, $\overline{\tau}$ is small enough, and $\tau \leq \overline{\tau}$, we have
\begin{equation*}
    \Pr(m_2=j|s_1) R^j +\Pr(m_2 \leq 1 |s_1) R^0>\Pr(m_2 =1 |s_1) R^1 + \Pr(m_2 \neq 1 | s_1) R^0.
\end{equation*}
Hence, agent 1 strictly prefers to send message $j$ when he receives signal $s_1$ such that $h(s_1)=j$.
  \item When $R^1-R^0 >  \frac{4c}{q(\theta^1)}$,  conditional on agent 2's strategy belongs to $\Delta (\Sigma_2^*)$ and agent 2 intends to be truthful with probability at least $\frac{1}{2}$,
agent 1 intending to send message $1$ when $h(s_1)=1$ leads to a strictly greater transfer compared to him intending to send any negative message.
\end{itemize}

The second step uses the critical path lemma. We can show that for every $\varepsilon>0$, there exists $\eta>0$ such that for every $\eta$-perturbation $\mathcal{G}$, there exists an equilibrium $\sigma(\mathcal{G})$ in the perturbed restricted game where both agents intend to be truthful with probability more than $1-\frac{\varepsilon}{2}$. Under the outcome function $g$ of our mechanism, if both agents behave according to  $\sigma(\mathcal{G})$ and $\overline{\tau}$ is small compared to $\varepsilon$, then the implemented outcome is $\varepsilon$-close to $f(\theta)$ conditional on every $\theta$.

For the third step, for every strategy of agent $1$'s $(m^1,...,m^{|S_1|}) \notin \Sigma_1^*$ that is non-constant,
let $(m^1_*,...,m^{|S_1|}_*) \in \Sigma^*$ be defined as:
  \begin{equation*}
m_{*}^k \equiv \left\{ \begin{array}{ll}
m^k & \textrm{ if } m^k \in \{-n,...,-2\} \cup \{1,j\} \textrm{ and } h_1(s_1^k)=j\\
-m^k & \textrm{ if } m^k \notin \{-n,...,-2\} \cup \{1,j\} \textrm{ and } h_1(s_1^k)=j
\end{array} \right. \textrm{ for every } k \in \{1,2,...,|S_1|\}.
\end{equation*}
Intuitively, for every signal realization $s_1^k$, $m_*^k=m^k$ if $m^k$ is no more than $1$ or $m^k$ coincides with the meaning of $s_1^k$; otherwise, $m_*^k=-m^k$. By construction, $(m^1,...,m^{|S_1|})$ and $(m^1_*,...,m^{|S_1|}_*)$ induce the same joint distribution of $(\theta,y)$.

We compare agent 1's expected transfer from $(m^1,...,m^{|S_1|})$ and from $(m^1_*,...,m^{|S_1|}_*)$.
When agent $1$'s private signal $s_1$ is such that $h(s_1)=j$, agent 1's expected transfer when his realized message $m \notin \{-n,...,-2\} \cup \{1,j\}$ is:
\begin{equation}\label{3.9}
    \Pr(m_2=m|s_1) R^{m} + \Pr(m_2 \leq 1|s_1) (R^0-x).
\end{equation}
His expected transfer when his realized message is $-m$ is $R^0$.
Since $\Pr(m_2=m|s_1) \leq 2 \overline{\tau}$ when agent 2's strategy belongs to $\Delta (\Sigma^*)$, the value of (\ref{3.9}) is strictly less than $R^0$ when $\overline{\tau}$ is small. This implies that every type of agent 1 prefers
$(m^1_*,...,m^{|S_1|}_*)$ to $(m^1,...,m^{|S_1|})$.

For every $(m^1,...,m^{|S_1|}) \notin \Sigma_1^*$ that is constant, there exists $k \in \{2,3,...,n\}$ such that $(m^1,...,m^{|S_1|})=(k,...,k)$.
Compare agent 1's expected transfer (unconditioned on $\theta$, $s_1$, and $s_2$) when his realized message is $k$ and when his realized message is $-k$. When his realized message is $k$, he receives an expected transfer of $\Pr(m_2=k) R^k +\Pr(m_2 \leq 1) (R^0-x)$. When his realized message is $-k$, he receives an expected transfer of $R^0$.
When agent 2's strategy belongs to $\Delta (\Sigma_2^*)$,
\begin{equation*}
    \Pr(m_2=k) \leq \pi(h_2(s_2)=k) + \Big( 1-\pi(h_2(s_2)=k) \Big) \overline{\tau} \quad \textrm{and} \quad
\Pr(m_2 \leq 1) \geq \pi(h_2(s_2)=1) (1-\overline{\tau}).
\end{equation*}
Hence, $\Pr(m_2=k) R^k +\Pr(m_2 \leq 1) (R^0-x)< R^0$
when $\overline{\tau}$ is small enough. Hence, conditional on agent 2 behaves according to $\sigma(\mathcal{G})$, every type of agent 1 prefers strategy $(-k,...,-k)$ to $(k,k,...k)$ for every $k \geq 2$.

\section{Stronger Notions of Robust Implementation}\label{sec5}
First, we show that the planner \textit{cannot} implement any state-contingent social choice function when we allow for perturbations where agents' payoffs do not coincide with those in the unperturbed environment with high probability.
Second, we show that the planner cannot fully or virtually implement any state-contingent social choice function when agents' payoff functions do not depend on the state, or when agents' costs of learning are above some cutoff. We also provide a sufficient condition on agents' payoff functions under which full implementation is plausible when the costs of learning are sufficiently small. Throughout this section, we focus on non-constant $f$:
\begin{Definition}
$f$ is non-constant if there exist $\theta,\theta'$ such that $f(\theta) \neq f(\theta')$.
\end{Definition}
\subsection{Impossibility of Global Robust Implementation}\label{sub4.4}
We show that for every $\overline{c} \geq \max \{\overline{c}_1,\overline{c}_2\}$, no finite mechanism can approximately implement any non-constant social choice function for all $\overline{c}$-bounded perturbations.
\begin{Proposition}\label{Prop1}
For every $\overline{c}\geq \max \{\overline{c}_1,\overline{c}_2\}$ and every $f: \Theta \rightarrow \Delta (Y)$ that is non-constant, there exists no finite mechanism that can globally implement $f$ for all $\overline{c}$-bounded perturbations.
\end{Proposition}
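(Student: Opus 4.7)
The plan is to construct, for any given finite mechanism $\mathcal{M}=\{M_1,M_2,g,t_1,t_2\}$, a single $\overline{c}$-bounded perturbation $\mathcal{G}_B$ parameterized by $B>0$ under which no Bayesian Nash equilibrium of $(\mathcal{M},\mathcal{G}_B)$ can approximate $f$ once $B$ is large enough. The perturbation turns the induced game into an approximately zero-sum game over the probability of a specific outcome $y$, whose value is state-independent; this pins down the $y$-probability of the equilibrium outcome to essentially a single number across states, contradicting the state-dependence of $f$.

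Because $f$ is non-constant, fix $\theta^a,\theta^b\in\Theta$ and $y\in Y$ with $\delta:=|f(\theta^a)(y)-f(\theta^b)(y)|>0$. Given $\mathcal{M}$, let $v^*\in[0,1]$ be the value of the finite zero-sum game on $M_1\times M_2$ in which agent~1's payoff is $g(m_1,m_2)(y)$ and agent~2's is $-g(m_1,m_2)(y)$, and let $\sigma_1^*\in\Delta(M_1)$ and $\sigma_2^*\in\Delta(M_2)$ be corresponding maxmin strategies, so that $E_{m_1\sim\sigma_1^*}[g(m_1,m_2)(y)]\geq v^*$ for every $m_2\in M_2$ and $E_{m_2\sim\sigma_2^*}[g(m_1,m_2)(y)]\leq v^*$ for every $m_1\in M_1$. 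Let $\Delta t:=\max_i(\max t_i-\min t_i)<\infty$. For each $B>0$ define $\mathcal{G}_B$ by $\Omega=\{\omega^*\}$, $\Pi(\omega^*)=1$, $Q_i=\{\{\omega^*\}\}$, $\widetilde{u}_1(\omega^*,\theta,y')=B\mathbf{1}[y'=y]$, $\widetilde{u}_2(\omega^*,\theta,y')=-B\mathbf{1}[y'=y]$, and $\widetilde{c}_i(\omega^*)=0$; this is a $\overline{c}$-bounded perturbation for any $\overline{c}\geq 0$.

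The central claim is that in every Bayesian Nash equilibrium $\sigma$ of $(\mathcal{M},\mathcal{G}_B)$, one has $|g_\sigma(\theta^a)(y)-g_\sigma(\theta^b)(y)|\leq 2\Delta t/B$. The argument proceeds by cases according to the agents' equilibrium decisions to learn $\theta$. If both agents learn, then at each state $\theta$ the interim strategies must be mutual best responses; the feasibility of the deviation ``play $\sigma_1^*$'' together with the maxmin bound $E_{m_2\sim\widetilde\sigma_2(\theta)}[g(\sigma_1^*,m_2)(y)]\geq v^*$ yields $\max t_1+Bg_\sigma(\theta)(y)\geq \min t_1+Bv^*$, i.e., $g_\sigma(\theta)(y)\geq v^*-\Delta t/B$, with the symmetric upper bound from agent~2's deviation to $\sigma_2^*$. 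If neither agent learns, strategies are state-independent so $g_\sigma(\theta)(y)$ does not depend on $\theta$ and the same deviation argument, applied ex ante, pins it down to within $\Delta t/B$ of $v^*$. The subtle case, which I expect to be the main obstacle, is when exactly one agent learns: if agent~2 plays state-independent $\sigma_2$, then agent~1's interim set of pure best responses to $\sigma_2$ is itself state-independent, so any state-dependent mixing by agent~1 is confined to this common argmax set; the defining argmax equalities then imply $|E_{m_2\sim\sigma_2}[g(m_1,m_2)(y)-g(m_1',m_2)(y)]|\leq \Delta t/B$ for any two $m_1,m_1'$ in the set, so $g_\sigma(\theta)(y)$ varies across states by at most $\Delta t/B$.

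Combining the three cases yields $|g_\sigma(\theta^a)(y)-g_\sigma(\theta^b)(y)|\leq 2\Delta t/B$ in every equilibrium of $(\mathcal{M},\mathcal{G}_B)$. If $\mathcal{M}$ were to globally implement $f$ for $\overline{c}$-bounded perturbations, then for $\varepsilon:=\delta/4$ and the specific perturbation $\mathcal{G}_B$ there would exist an equilibrium $\sigma$ with $\|g_\sigma(\theta)-f(\theta)\|_{TV}<\varepsilon$, and hence $|g_\sigma(\theta)(y)-f(\theta)(y)|<\varepsilon$ for each $\theta$. The triangle inequality and the claim then give $\delta\leq 2\varepsilon+2\Delta t/B=\delta/2+2\Delta t/B$, which fails once $B>4\Delta t/\delta$, yielding the desired contradiction.
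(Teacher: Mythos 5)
Your argument is correct in substance but takes a genuinely different route from the paper. The paper constructs \emph{two} perturbations: using the non-constancy of $f$ it finds a state $\theta^*$ with $f(\theta^*)$ outside the convex hull $\mathcal{Y}$ of the other target outcomes, separates them with a linear functional $v$, and then shows that any mechanism implementing $f(\theta^*)$ at $\theta^*$ when agent 1's utility is $+Cv(y)$ hands agent 2 a strategy $m_2^*$ that, when agent 2's utility is instead $-Cv(y)$, guarantees him more than he could ever get if any outcome in $\mathcal{Y}$ were implemented --- so the same mechanism must fail under the second perturbation. You instead use a \emph{single} zero-sum perturbation in the probability of one outcome $y$ on which $f$ varies, and invoke the minimax theorem to pin every equilibrium's $y$-probability to within $O(\Delta t/B)$ of the zero-sum value $v^*$ \emph{uniformly across states}. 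Your approach buys a single witnessing perturbation and an explicit quantitative bound of the flavor of the paper's Corollary 1 (after rescaling by the prior probability of the perturbed circumstance); the paper's approach avoids any analysis of the equilibrium learning decisions, since it only needs one agent's ability to force or block outcomes against a fixed opposing strategy.

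One loose end in your case analysis: agents may randomize over whether to learn, so the three configurations (both learn, neither learns, exactly one learns) do not literally exhaust the equilibria, and conditioning on realized learning decisions does not reduce cleanly to your Case 3 because interim optimality is with respect to the opponent's \emph{overall} state-contingent message distribution. The clean patch uses the fact that you set $\widetilde{c}_i(\omega^*)=0$: if a best response puts positive probability on not learning, the no-learning message distribution must achieve $\sum_{\theta}q(\theta)\max_{m_1}\mathbb{E}_{m_2\sim\mu_2(\theta)}[U_1(m_1,m_2)]$, hence (as $q$ has full support) must be interim optimal at every state. Therefore every message in the support of either agent's state-$\theta$ behavior is an interim best response, the pair of overall state-$\theta$ message distributions is a Nash equilibrium of the complete-information game at $\theta$, and your Case 1 deviation argument applies to all equilibria at once, making Cases 2 and 3 unnecessary.
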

Proposition \ref{Prop1} implies that no finite mechanism can approximately implement any state-contingent $f$ for all perturbations. It also implies the following corollary:
\begin{Corollary}
For every $f: \Theta \rightarrow \Delta (Y)$ that is non-constant, there exists $k(f)>0$ such that for every finite mechanism $\mathcal{M}$ and every $\eta>0$, there exists a $\overline{c}$-bounded $\eta$-perturbation $\mathcal{G}$, such that for every equilibrium $\sigma (\mathcal{G})$ of the game $(\mathcal{M},\mathcal{G})$, we have $\max_{\theta \in \Theta} ||    g_{\sigma(\mathcal{G})}(\theta)-f(\theta) ||_{\textrm{TV}} \geq \eta k(f)$.
\end{Corollary}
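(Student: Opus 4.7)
The plan is to derive the Corollary as the quantitative, mechanism-uniform counterpart of Proposition~\ref{Prop1} by reading off an explicit constant from the contagion construction in its proof.

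\textbf{Step 1 (input from Proposition~\ref{Prop1}).} The proof of Proposition~\ref{Prop1} should exhibit, for every finite mechanism $\mathcal{M}$ and every $\eta\in(0,1]$, a $\overline{c}$-bounded $\eta$-perturbation $\mathcal{G}_\eta(\mathcal{M})$ with a geometric partition structure $\Pi(\omega_t)=\eta(1-\eta)^t$, in the spirit of the ``Challenges with Biased Agents'' discussion of Section~\ref{sec2}. A contagion (email-game) argument then forces every Bayes--Nash equilibrium of $(\mathcal{M},\mathcal{G}_\eta(\mathcal{M}))$ to have all types of both agents play a single state-independent message, so that the induced distribution over outcomes $y^{*}\in\Delta(Y)$ is the same in every state $\theta$.

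\textbf{Step 2 (triangle bound on $f$'s spread).} Since $f$ is non-constant, $D(f):=\max_{\theta,\theta'\in\Theta}\|f(\theta)-f(\theta')\|_{\mathrm{TV}}>0$. Choosing $\theta^{a},\theta^{b}$ attaining this maximum, the triangle inequality $\|y-f(\theta^{a})\|_{\mathrm{TV}}+\|y-f(\theta^{b})\|_{\mathrm{TV}}\ge\|f(\theta^{a})-f(\theta^{b})\|_{\mathrm{TV}}$ yields, for every $y\in\Delta(Y)$,
\[
\max_{\theta\in\Theta}\|y-f(\theta)\|_{\mathrm{TV}}\;\ge\;\tfrac{1}{2}D(f).
\]

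\textbf{Step 3 (combine).} Set $k(f):=\tfrac{1}{2}D(f)>0$, which depends only on $f$. Applying Step~2 to the state-independent contagion outcome $y=y^{*}$ from Step~1, every equilibrium $\sigma(\mathcal{G}_\eta(\mathcal{M}))$ satisfies
\[
\max_{\theta\in\Theta}\|g_{\sigma(\mathcal{G}_\eta)}(\theta)-f(\theta)\|_{\mathrm{TV}}\;\ge\;k(f)\;\ge\;\eta\,k(f),
\]
where the final inequality uses $\eta\le 1$. This is precisely the Corollary, with $k(f)$ independent of the mechanism and of $\eta$.

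\textbf{Main obstacle.} The one thing I would need to verify in detail is Step~1: that the proof of Proposition~\ref{Prop1} really does produce a \emph{state-independent} equilibrium outcome for arbitrary finite $\mathcal{M}$, not merely one that is off $f$ by some margin. The contagion sketched in Section~\ref{sec2} has this property, because each agent type iteratively best-responds to a single state-independent message of the other agent, so the joint report profile, and hence the outcome, does not condition on $\theta$. If for some mechanisms Proposition~\ref{Prop1}'s construction only delivers a weaker, state-dependent off-target outcome $y^{*}_\theta$, one can still fall back on the reverse-triangle estimate $\max_\theta\|y^{*}_\theta-f(\theta)\|_{\mathrm{TV}}\ge\tfrac{1}{2}\bigl(D(f)-\max_{\theta,\theta'}\|y^{*}_\theta-y^{*}_{\theta'}\|_{\mathrm{TV}}\bigr)$, so the Corollary still holds as long as the contagion produces strictly less state-discrimination than $f$ does --- a property that the contagion logic delivers because agents, unable to coordinate on the state via the biased types driving the contagion, cannot out-match the state-responsiveness of $f$.
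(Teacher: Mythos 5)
Your Step 1 is where the argument breaks, and it breaks in two ways. First, it misreads what Proposition~\ref{Prop1} provides: the paper's proof of that proposition is not a contagion/email-game construction with a geometric prior. It constructs two perturbations $\mathcal{G}^+$ and $\mathcal{G}^-$ in which one agent's outcome preference is $\pm Cv(y)$ with probability \emph{one} (where $v$ comes from a separating hyperplane isolating $f(\theta^*)$ from the convex hull of the other values of $f$, and $C$ is scaled against the mechanism's maximal transfer $X(\mathcal{M})$), and shows no mechanism can implement $f$ under both. These are not $\eta$-perturbations for small $\eta$, and no unraveling of all types is involved; the contagion story appears only in the Section~\ref{sec2} example and exploits the specific structure of the Maskin mechanism there. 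Second, and more fatally, the claim you need --- that for \emph{every} finite mechanism and every $\eta>0$ there is an $\eta$-perturbation forcing every equilibrium to produce a state-independent outcome --- is false. It would contradict Theorems~\ref{Theorem1} and \ref{Theorem2}: the augmented status quo rules are constructed precisely so that every sufficiently small $\eta$-perturbation admits an equilibrium within $\varepsilon$ of the non-constant $f$, for $\varepsilon$ as small as you like, in particular smaller than your $k(f)=\tfrac{1}{2}D(f)$.

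The tell-tale symptom is your final chain $\max_{\theta}\|g_{\sigma}(\theta)-f(\theta)\|_{\mathrm{TV}}\ge k(f)\ge \eta k(f)$: you have produced an $\eta$-independent lower bound and then discarded the factor $\eta$. But the entire content of the Corollary, as the paper emphasizes immediately after stating it, is that the unavoidable distortion is \emph{linear in} $\eta$ and therefore vanishes as $\eta\to 0$ (consistent with the possibility theorems). The correct route is to embed the biased types of $\mathcal{G}^{\pm}$ from the proof of Proposition~\ref{Prop1} into an otherwise-normal environment with total probability $\eta$, and argue a dichotomy within a single equilibrium: either the normal types already fail to implement $f$ by a fixed margin, or they approximately implement it, in which case inequality (\ref{4.1}) applies to the message $m_2^*$ sent in state $\theta^*$ and the $-Cv$-biased type of agent 2 (present with probability of order $\eta$) strictly prefers to secure the status-quo-like payoff via $m_2^*$, so that on an event of probability of order $\eta$ the realized outcome lies outside a neighborhood of $f(\theta)$ for some $\theta$ with $f(\theta)\in\mathcal{Y}$. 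The TV distortion this generates is the probability of that event times a constant determined by the separation between $f(\theta^*)$ and $\mathcal{Y}$ (equal to $1$ when $f$ is pure), which is exactly the $\eta k(f)$ bound. Your Step 2 triangle inequality is fine as a tool, but it cannot be applied to a single state-independent $y^*$ because no such $y^*$ exists for well-designed mechanisms.
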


The above corollary implies that when we allow for perturbations where agents' payoff functions do not coincide with those in the unperturbed environment with probability bounded away from zero, there exists such a perturbation under which every equilibrium of the game
implements a social choice function that is bounded away from the desired one $f$. The distance between every implemented outcome and $f$ is bounded from below by a linear function of $\eta$, with the coefficient depending only on the social choice function $f$.
For example,
if $f(\theta)$ is a pure outcome for every $\theta \in \Theta$, then $k(f)$ equals $1$.
This result together with our previous results implies that
the perturbed environment being \textit{close} to the unperturbed environment
is somewhat necessary
for robust implementation.
\begin{proof}[Proof of Proposition 1:]
For any finite mechanism $\mathcal{M} \equiv \{M_1,M_2,g,t_1,t_2\}$, let
\begin{equation*}
    X(\mathcal{M}) \equiv \max_{(i, m_1,m_2) \in \{1,2\} \times M_1 \times M_2} \Big| t_i(m_1,m_2)\Big|.
\end{equation*}
By definition, $X(\mathcal{M})$ exists and is finite.
Since $f$ is non-constant, there exists $\theta^* \in \Theta$ such that
\begin{equation*}
    f(\theta^*) \notin \underbrace{\textrm{co} \Big(
    \{f(\theta)\}_{\theta \in \Theta} \backslash \{f(\theta^*)\}
    \Big)}_{ \equiv \mathcal{Y}}.
\end{equation*}
According to the separating hyperplane theorem, there exists $v: Y \rightarrow \mathbb{R}$ such that $v(f(\theta^*)) < \min_{y \in \mathcal{Y}} v(y)$. Hence, there exists $C>0$ such that $ \Big( \min_{y \in \mathcal{Y}} v(y)-v(f(\theta^*))\Big) C > 4X(\mathcal{M})$.

First, consider a perturbation $\mathcal{G}^+$ where $\widetilde{u}_1(\omega,\theta,y)=C v(y)$ for all $\omega \in \Omega$. If $\mathcal{M}$ implements $f(\theta^*)$ in state $\theta^*$, then there exists $m_2^* \in \Delta (M_2)$ such that
  \begin{equation}\label{4.1}
 \max_{m_1 \in \Delta(M_1)} \{C v(g(m_1,m_2^*)) +t_1(m_1,m_2^*)\} \leq \underbrace{C v(f(\theta^*))+X (\mathcal{M})}_{\textrm{agent 1's highest payoff when the outcome belongs to $\mathcal{Y}$}}.
  \end{equation}

Next, consider another perturbation $\mathcal{G}^-$ where $\widetilde{u}_2(\omega,\theta,y)=-C v(y)$ for all $\omega \in \Omega$. According to (\ref{4.1}), agent 2's payoff is at least
\begin{equation}\label{4.2}
    \min_{m_1 \in \Delta (M_1)} \{-C v(g(m_1,m_2^*)) +t_2(m_1,m_2^*)\}
\end{equation}
if he plays according to $m_2^*$. Since $C>0$ is chosen such that $\Big( \min_{y \in \mathcal{Y}} v(y)-v(f(\theta^*))\Big) C > 4X(\mathcal{M})$ and $X(\mathcal{M}) \geq |t_i(m_1,m_2)|$ for every $i$ and $(m_1,m_2)$, inequalities (\ref{4.1}) and (\ref{4.2}) imply that
\begin{equation}\label{4.3}
     \min_{m_1 \in \Delta (M_1)} \{-C v(g(m_1,m_2^*)) +t_2(m_1,m_2^*)\} \geq -C v(f(\theta^*))-3X (\mathcal{M})> \max_{y \in \mathcal{Y}} \{-C v(y) \}+X(\mathcal{M}).
\end{equation}

For an outcome in $\mathcal{Y}$ to be implemented in any state under perturbation $\mathcal{G}^-$, it must be the case that agent 2's payoff is no more than $\max_{y \in \mathcal{Y}} \{-C v(y)\}+X(\mathcal{M})$. Inequality (\ref{4.3}) implies that no outcome in $\mathcal{Y}$ can be implemented in any state, which implies that every mechanism $\mathcal{M}$ that can implement $f$ under perturbation $\mathcal{G}^+$ cannot implement $f$ under perturbation $\mathcal{G}^-$.
\end{proof}

\subsection{Full Implementation}\label{sub5.2}
Our main results in Section \ref{sec4} focus on robust \textit{partial} implementation: The planner's objective is to design a mechanism so that for every small perturbation, there exists \textit{one} equilibrium whose induced outcome is close to $f$. In what follows, we discuss whether the planner can design a mechanism that can fully or virtually implement $f$ under every small perturbation.

Formally, we say that $f$ is \textit{fully implementable} if there exists a finite mechanism $\mathcal{M} \equiv \{M_1,M_2,g,t_1,t_2\}$ such that $g_{\sigma}(\theta)=f(\theta)$ for every $\theta \in \Theta$ and every equilibrium $\sigma$ under mechanism $\mathcal{M}$.
We say that $f$ is \textit{virtually implementable} if for every $\varepsilon>0$, there exists a finite mechanism $\mathcal{M}$ such that
$||g_{\sigma}(\theta)-f(\theta)||_{TV} \leq \varepsilon$ for every $\theta \in \Theta$ and every equilibrium $\sigma$ under mechanism $\mathcal{M}$.
We say that $f$ is \textit{strongly-robust implementable} if for every $\varepsilon>0$, there exists $\eta>0$ such that for every $\eta$-perturbation $\mathcal{G}$, $||g_{\sigma(\mathcal{G})}(\theta)-f(\theta)||_{TV} \leq \varepsilon$ for every $\theta \in \Theta$ and every equilibrium $\sigma(\mathcal{G})$ of $(\mathcal{M},\mathcal{G})$.
We introduce two conditions under which full and virtual implementation is impossible.
\begin{Proposition}\label{Prop2}
Suppose $f$ is non-constant.
\begin{enumerate}
  \item If $(u_1,u_2)$ do not depend on $\theta$ and $c_1,c_2>0$, then $f$ is not virtually implementable.
  \item For every $(u_1,u_2)$, there exists $\overline{c}>0$ such that $f$ is not virtually implementable when $c_1,c_2 > \overline{c}$.
\end{enumerate}
\end{Proposition}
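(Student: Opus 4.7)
For both parts, the plan is to exhibit, for every finite mechanism $\mathcal{M}$, a Nash equilibrium of the induced game in which neither agent acquires information and both agents' messages are sampled from distributions independent of $\theta$. Any such equilibrium induces a fixed outcome $y^{\star}\in\Delta(Y)$ that does not depend on $\theta$. Since $f$ is non-constant there exist $\theta,\theta'$ with $f(\theta)\neq f(\theta')$, and the triangle inequality gives $\max\{||y^{\star}-f(\theta)||_{\textrm{TV}},||y^{\star}-f(\theta')||_{\textrm{TV}}\}\geq \tfrac{1}{2}||f(\theta)-f(\theta')||_{\textrm{TV}}$ regardless of $y^{\star}$. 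Virtual implementation therefore fails for every $\varepsilon<\tfrac{1}{2}||f(\theta)-f(\theta')||_{\textrm{TV}}$.

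To construct the equilibrium I first consider the restricted game in which both agents are forbidden from acquiring information and simply choose mixed strategies over $M_1$ and $M_2$. This is a finite normal-form game, so a Nash equilibrium $(\sigma_1^{\star},\sigma_2^{\star})$ exists. The key step is to show that in the full game neither agent has a profitable deviation to a strategy that first pays $c_i$ to observe $\theta$ and then conditions the message on it. Because $\sigma_{-i}^{\star}$ is independent of $\theta$, the conditional distribution of $m_{-i}$ given $\theta$ is the same across all $\theta$, and agent $i$'s expected payoff from message $m_i$ in state $\theta$ decomposes as $F_i(m_i,\theta)=T_i(m_i)+U_i(m_i,\theta)$, where $T_i(m_i)=\mathbb{E}_{m_{-i}\sim\sigma_{-i}^{\star}}t_i(m_i,m_{-i})$ does not depend on $\theta$ and $U_i(m_i,\theta)=\mathbb{E}_{m_{-i}\sim\sigma_{-i}^{\star}}u_i(\theta,g(m_i,m_{-i}))$ carries all of the $\theta$-dependence.

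The value of information given $\sigma_{-i}^{\star}$ is $V_i=\sum_{\theta}q(\theta)\max_{m_i}F_i(m_i,\theta)-\max_{m_i}\sum_{\theta}q(\theta)F_i(m_i,\theta)$. Writing $m^{\circ}$ for the no-information optimum and $m^{\star}(\theta)$ for the informed optimum, the defining inequality for $m^{\circ}$ yields $T_i(m^{\star}(\theta))-T_i(m^{\circ})\leq\sum_{\theta'}q(\theta')[U_i(m^{\circ},\theta')-U_i(m^{\star}(\theta),\theta')]$; substituting this into $V_i$ and cancelling the transfer terms gives
\begin{equation*}
V_i\leq\max_{\theta,\theta',m_i}|U_i(m_i,\theta)-U_i(m_i,\theta')|\leq\max_{\theta,\theta',y}|u_i(\theta,y)-u_i(\theta',y)|\equiv K(u_i).
\end{equation*}
For part~1, transparent motives give $K(u_i)=0$, so any $c_i>0$ strictly deters acquisition. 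For part~2, I set $\overline{c}\equiv\max\{K(u_1),K(u_2)\}$, so that $c_i>\overline{c}$ deters deviation for both agents. In either case $(\sigma_1^{\star},\sigma_2^{\star})$ combined with no information acquisition is an equilibrium of the full game, and the first paragraph delivers the desired impossibility.

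The hard part is ruling out the planner defeating this no-information equilibrium through arbitrarily large transfers in a finite mechanism. The bound above addresses exactly this: because $\sigma_{-i}^{\star}$ is uninformative about $\theta$, transfers contribute only a $\theta$-independent summand to $F_i$ and so cannot create value for information; only the variation of $u_i$ across $\theta$ can. This is precisely why the cutoff $\overline{c}$ in part~2 can be chosen as a function of $u_1,u_2$ alone, independently of the candidate mechanism.
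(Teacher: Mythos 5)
Your proposal is correct and follows essentially the same route as the paper: exhibit a Nash equilibrium of the auxiliary game in which both agents choose messages without learning $\theta$, and bound the value of information by the state-sensitivity of $u_i$ alone, exploiting the fact that transfers enter agent $i$'s payoff through a $\theta$-independent term once $\sigma_{-i}^{\star}$ is uninformative. The only differences are cosmetic: your cutoff $\max\{K(u_1),K(u_2)\}$ plays the role of the paper's $2X(u_1,u_2)$ (note the direct substitute-and-cancel step yields $2K(u_i)$ rather than $K(u_i)$, though the tighter bound does hold and any finite cutoff suffices for the statement), and you treat part~1 as the $K=0$ special case rather than giving it a separate one-line argument.
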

\begin{proof}[Proof of Proposition 2:]
When $(u_1,u_2)$ do not depend on $\theta$ and $c_1,c_2>0$, there always exists an equilibrium where neither agent pays the strictly positive cost to learn the state. In this equilibrium, the implemented outcome does not depend on the state, which implies that no mechanism can virtually implement any non-constant $f$.

Next, we show that no mechanism can virtually implement non-constant $f$ when $c_1$ and $c_2$ are sufficiently large. For every $u_1$ and $u_2$, let
\begin{equation*}
    X(u_1,u_2) \equiv \max_{i \in \{1,2\}} \Big|\max_{\theta,y} u_i(\theta,y)-\min_{\theta,y} u_i(\theta,y)\Big|.
\end{equation*}
Fix any finite mechanism $\mathcal{M}$, for every $m_2 \in \Delta (M_2)$, let
\begin{equation*}
    T(m_2) \equiv \max_{m_1 \in M_1} t_1(m_1,m_2).
\end{equation*}
Suppose agent $1$ believes that agent 2's message is $m_2$, the difference between his expected payoff when he learns $\theta$ and when he does not learn $\theta$ is
\begin{equation}\label{5.0}
\mathbb{E}\Big[
\max_{m_1 \in M_1} \{u_1(\theta,g(m_1,m_2))+t_1(m_1,m_2)\}
\Big]- \max_{m_1 \in M_1} \mathbb{E}\Big[u_1(\theta,g(m_1,m_2))+t_1(m_1,m_2)\Big].
\end{equation}
By definition, if $m_1^* \in \arg\max_{m_1 \in M_1} \mathbb{E}\Big[u_1(\theta,g(m_1,m_2))+t_1(m_1,m_2)\Big]$, then $t_1(m_1^*,m_2) \geq T(m_2)-X(u_1,u_2)$. This implies that the value of  (\ref{5.0}) is no more than $2X (u_1,u_2)$, and therefore, agent 1 has no incentive to learn $\theta$ when $c_1 > 2X(u_1,u_2)$. In addition, when agent 1 believes that agent 2's message is $m_2$, sending a message that belongs to $\arg\max_{m_1 \in M_1} \mathbb{E}\Big[u_1(\theta,g(m_1,m_2))+t_1(m_1,m_2)\Big]$ regardless of the state is one of agent 1's best replies.

Similarly, suppose $c_2>2X (u_1,u_2)$. For every $m_1 \in \Delta (M_2)$, when agent $2$ believes that agent 1's message is $m_1$, sending a message that belongs to $\arg\max_{m_2 \in M_2} \mathbb{E}\Big[u_2(\theta,g(m_1,m_2))+t_2(m_1,m_2)\Big]$ regardless of the state is one of agent 2's best replies. For every $\mathcal{M}$, consider an auxiliary two-player normal form game where agent $i \in \{1,2\}$ has a finite set of pure strategies $M_i$ and his payoff is
\begin{equation*}
    \mathbb{E}\Big[
    u_i(\theta,g(m_1,m_2))+t_i(m_1,m_2)
    \Big]
\end{equation*}
when he uses strategy $m_i$ and his opponent uses strategy $m_{-i}$. Since this is a finite game, a Nash equilibrium $(m_1,m_2) \in \Delta (M_1) \times \Delta (M_2)$ exists. By construction, agent 1 sending $m_1$ regardless of $\theta$ and agent 2 sending $m_2$ regardless of $\theta$ is an equilibrium under mechanism $\mathcal{M}$. In this equilibrium, the implemented outcome is the same for all $\theta$, which means that it cannot fully implement $f$ when $f$ is non-constant.
\end{proof}

Proposition \ref{Prop2} and its proof imply that when agents' costs of learning are large relative to the responsiveness of $(u_1,u_2)$ with respect to $\theta$, there always exists an equilibrium where neither agent pays the cost to learn the state regardless of how large $(t_1,t_2)$ are. Intuitively, each agent's transfer depends only on the message profile but not on the realized state. When agent 2's message does not depend on the state, the only incentive for agent 1 to learn the state is to induce a more favorable state-contingent outcome. When his cost of learning outweighs the benefit from learning, he has no incentive to learn the state, which gives rise to equilibria where no agent learns the state. By contrast, our main results focus on partial implementation and in the equilibria we construct, both agents' messages depend on the state. As a result, every agent's benefit from learning not only comes from inducing a better state-contingent outcome, but also comes from his incentive to receive a higher transfer. As a result, the planner can robustly implement any state-contingent $f$ even when $c_1$ and $c_2$ are large and $(u_1,u_2)$ do not depend on the state.

Nevertheless, $f$ is fully implementable when one of the agent's payoff function satisfies a strict version of Rochet (1987)'s cyclical monotonicity condition and that agent's cost of learning is sufficiently small.
Formally, for every $f: \Theta \rightarrow \Delta (Y)$ and $u_i: \Theta \times Y \rightarrow \mathbb{R}$, we say that $u_i$ and $f$ satisfy \textit{strict cyclical monotonicity} if for every permutation $\tau$ of $\Theta$,
  \begin{equation}\label{5.1}
    \sum_{\theta \in \Theta} u_i(\theta,f(\theta)) \geq \sum_{\theta \in \Theta} u_i \Big( \theta , f(\tau(\theta)) \Big).
  \end{equation}
and for every permutation $\tau$ such that $f(\tau(\theta)) \neq f(\theta)$ for some $\theta \in \Theta$,
  \begin{equation}\label{5.2}
    \sum_{\theta \in \Theta} u_i(\theta,f(\theta)) > \sum_{\theta \in \Theta} u_i \Big( \theta , f(\tau(\theta)) \Big).
  \end{equation}
Condition (\ref{5.1}) is the cyclical monotonicity condition in Rochet (1987). Condition (\ref{5.2}) is novel: for every $f$ that depends nontrivially on $\theta$, this condition rules out $u_i$ that does not depend on $\theta$.
\begin{Proposition}\label{Prop3}
If there exists $i \in \{1,2\}$ such that $u_i$ and $f$ satisfy strict cyclical monotonicity,
then there exists $\overline{c}>0$ such that for every $c< \overline{c}$, there exists a finite mechanism $\mathcal{M}$ that can fully implement $f$
and can strongly-robust implement $f$.
\end{Proposition}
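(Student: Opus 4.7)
The plan is to build a single-agent mechanism that relies entirely on agent $i$, whose preferences satisfy strict cyclical monotonicity with $f$. Specifically, I would set $M_i = \Theta$, give agent $-i$ a singleton message space (so that neither $g$ nor $t_i$ depends on $m_{-i}$), define $g(\hat{\theta},\cdot) = f(\hat{\theta})$, and pick $t_i(\hat{\theta},\cdot) = \tilde{t}(\hat{\theta})$, where $\tilde{t}$ is a slight perturbation of Rochet (1987)'s shortest-path transfer $t^R$ giving strict pairwise IC for all pairs with $f(\theta) \neq f(\hat{\theta})$. The weak cyclical monotonicity in (\ref{5.1}) ensures $t^R$ exists and satisfies $u_i(\theta,f(\theta)) + t^R(\theta) \geq u_i(\theta,f(\hat{\theta})) + t^R(\hat{\theta})$ for all $\theta,\hat{\theta} \in \Theta$. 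To upgrade this to strict IC where allocations differ, I would perturb by a small $\epsilon \delta$: on the directed graph of IC-tight pairs, every cycle must be allocation-preserving (summing tight IC equations telescopes to zero, contradicting the strict cyclical monotonicity of (\ref{5.2}) unless $f \circ \tau = f$ for the induced permutation), so the tight graph can be topologically ordered modulo allocation-preserving strongly connected components and $\delta$ can be chosen strictly decreasing along edges where $f$ changes. This yields transfers $\tilde{t}$ with strict IC whenever $f(\theta) \neq f(\hat{\theta})$.

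With strict pairwise IC in hand, full implementation in the unperturbed game follows easily. After learning $\theta$, every best response of agent $i$ induces outcome $f(\theta)$. The gain from learning over the best constant (equivalently, mixed non-learning) report is
\[
\Delta \equiv \sum_{\theta \in \Theta} q(\theta)\bigl[u_i(\theta,f(\theta)) + \tilde{t}(\theta)\bigr] - \max_{\hat{\theta} \in \Theta}\sum_{\theta \in \Theta} q(\theta)\bigl[u_i(\theta,f(\hat{\theta})) + \tilde{t}(\hat{\theta})\bigr].
\]
Because $f$ is non-constant and $q$ has full support, for every $\hat{\theta}$ there is some $\theta$ with $q(\theta) > 0$ and $f(\theta) \neq f(\hat{\theta})$, at which the strict IC inequality contributes a positive amount to $\Delta$; hence $\Delta > 0$. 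Setting $\overline{c} := \Delta$, for every $c < \overline{c}$ agent $i$ strictly prefers to learn and report truthfully, so the unique equilibrium outcome is $f(\theta)$ in every state.

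Strong-robust implementation then follows almost automatically from the mechanism's structure. Because $g$ and $t_i$ depend only on $m_i$, agent $i$'s decision problem is completely insulated from beliefs and higher-order beliefs about agent $-i$ and from any perturbation of agent $-i$'s payoffs. Every normal type of agent $i$ therefore faces the unperturbed decision problem and, by the previous paragraph, must learn and truthfully report. In any equilibrium $\sigma(\mathcal{G})$ of any $\eta$-perturbation, the probability that agent $i$ is a normal type is at least $1-\eta$ (since $\Pi$ is independent of $\theta$ by assumption), so $g_{\sigma(\mathcal{G})}(\theta)$ coincides with $f(\theta)$ except on an event of probability at most $\eta$, giving $||g_{\sigma(\mathcal{G})}(\theta) - f(\theta)||_{\textrm{TV}} \leq \eta$ uniformly in $\theta$; choosing $\eta < \varepsilon$ completes the proof. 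The main obstacle is the first step, i.e.\ converting permutation-based strict cyclical monotonicity into the strict pairwise IC needed for uniqueness of best responses; once that conversion is secured, the remainder is a single-agent screening calculation made robust by the mechanism's insulation from agent $-i$.
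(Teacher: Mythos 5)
Your proposal is correct and takes essentially the same route as the paper: a mechanism that ignores agent $-i$ entirely, transfers depending only on $m_i$ that make reporting an allocation-equivalent state strictly optimal after learning, learning strictly optimal for $c$ below the resulting gain $\Delta$, and strong-robust implementation from the fact that every normal type of agent $i$ (probability at least $1-\eta$, independent of $\theta$) faces the unperturbed single-agent problem. The only substantive difference is that you spell out how strict cyclical monotonicity upgrades Rochet's weak-IC transfers to strict IC across allocation classes, a step the paper simply asserts.
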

\begin{proof}[Proof of Proposition 3:]
If $f$ is constant, then the result is straightforward. We focus on $f$ that is non-constant.
Consider a mechanism where $M_i=\Theta$, $M_{-i}=\{1\}$, $g(m_i,m_{-i})=f(m_i)$, $t_i(m_1,m_2)$ depends only on $m_1$, and $t_{-i}(m_1,m_2)=0$. Since $f$ and $u_i$ satisfy strict cyclical monotonicity, there exists $t_i : \Theta \rightarrow \mathbb{R}$ such that
\begin{enumerate}
  \item $t_i(\theta)=t_i(\theta')$ for every $\theta,\theta' \in \Theta$ such that $f(\theta) = f(\theta')$,
  \item $u_i(\theta,f(\theta))+t_i(\theta) > u_i(\theta,f(\theta'))+t_i(\theta')$ for every $\theta,\theta' \in \Theta$ such that $f(\theta) \neq f(\theta')$.
\end{enumerate}
Under such a mechanism, agent $i$ chooses an outcome in $\{f(\theta)\}_{\theta \in \Theta}$ and receive an additional reward $t_i(\theta)$ for implementing $f(\theta)$. Agent $i$ has a strict incentive to choose $f(\theta)$ in state $\theta$, so he has a strict incentive to learn the state when $c$ is small enough. Under every $\eta$-perturbation $\mathcal{G}$, every normal type of agent $i$ has a strict incentive to learn the state and to induce outcome $f(\theta)$ in state $\theta$. This completes the proof.
\end{proof}

\section{Conclusion}\label{sec6}
We examine the problem faced by a planner when he wants to robustly implement a state-contingent social choice function when (i) agents need to incur costs to learn the state, (ii) the planner faces uncertainty about agents' costs of obtaining information, their biases over outcomes, as well as their beliefs and higher-order-beliefs about each other's payoffs. We introduce mechanisms that robustly implement the desired social choice function when the state distribution satisfies a generic assumption, or when the planner knows an upper bound on agents' costs of obtaining information.
We conclude by discussing the related literature.

Our work contributes to the literature on robust implementation, pioneered by Bergemann and Morris (2005). Our paper is closely related to Oury and Tercieux (2012), who take an interim perspective and require the outcome induced by every nearby type to be close to that induced by the original type. They show that Maskin monotonicity is necessary for robust partial implementation.\footnote{The literature on full implementation (Maskin 1999) and virtual implementation (Abreu and Matsushima 1992) examine whether the planner can implement or approximately implement the desired social choice function in all equilibria. Unlike our model, they do not consider any perturbations and only require full or virtual implementation in the unperturbed environment. Hence, our result is neither stronger nor weaker than theirs.} By contrast, we take an ex ante perspective by requiring that the desired outcome to be implemented with probability close to one in all nearby type spaces. We show that all social choice functions are robustly implementable and our mechanisms are robust to small trembles and noisy information about the state.
Our results design mechanisms that can robustly elicit costly information when the principal faces
uncertainty about agents' motives, costs of obtaining information, as well as beliefs and higher-order-beliefs
while assuming that agents' information acquisition technologies are known and their signals about the state are highly correlated.
This stands in contrast to Carroll (2019) who focuses on robust contracting when the principal faces uncertainty about the agent's information acquisition technology and
Aghion, Fudenberg, Holden, Kunimoto and Tercieux (2012) that perturbs players' information about the state.

Our work is related to the literature on robust prediction in games, pioneered by Rubinstein (1989). Our notion of robustness builds on the notion of robust equilibrium in Kajii and Morris (1997). Their notion of has been broadly applied to study the robustness of equilibria in potential games (Ui 2001, Morris and Ui 2005) and supermodular games (Oyama and Takahashi 2020). The key difference is that in our model, agents' payoffs in the perturbed game do not directly depend on their actions, which are their messages in our mechanism design setting. The assumption that agents' messages are cheap talk is common in the robust mechanism design literature, such as Bergemann and Morris (2005), Aghion, Fudenberg, Holden, Kunimoto and Tercieux (2012).

Finally, our work is also related to the large literature on contracting for information acquisition such as Zermeno (2011) and Clark and Reggiani (2021), as well as mechanism design with costly information acquisition such as Cr\'{e}mer and Khalil (1992), Persico (2000), Bergemann and V\"{a}lim\"{a}ki (2002), and Li (2019). Those papers that study agents' incentives to acquire information and/or characterize the optimal mechanism under a fixed informational environment. In contrast, we examine whether it is possible to (approximately) implement a desired social choice function in \textit{all} nearby environments. A notable exception is Carroll (2019), who characterizes the optimal contract for information acquisition when
the principal faces uncertainty about the set of information acquisition technologies available to the agent, but can condition the agent's transfer on the realized state. By contrast, the state cannot be verified ex post in our setting, and the principal faces uncertainty not about the information acquisition technology, but about the agents' biases over outcomes, costs of obtaining information, and beliefs about each other's payoffs.

\end{spacing}

\end{document}